\tikzstyle{env}=[copoint,regular polygon rotate=0,minimum width=0.2cm, fill=black]
\tikzstyle{hadamard edge}=[-,color=blue,dashed,dash pattern=on 2pt off 0.7pt]
\tikzstyle{black edge}=[-,color=black,dashed,dash pattern=on 2pt off 0.7pt]
\tikzstyle{every picture}=[baseline=-0.25em]
\tikzstyle{dotpic}=[scale=0.5]
\tikzstyle{diredges}=[every to/.style={diredge}]
\tikzstyle{dot graph}=[shorten <=-0.1mm,shorten >=-0.1mm,scale=0.6]
\tikzstyle{plot point}=[circle,fill=black,minimum width=2mm,inner sep=0]
\tikzstyle{braceedge}=[decorate,decoration={brace,amplitude=2mm,raise=-1mm}]
\tikzstyle{small braceedge}=[decorate,decoration={brace,amplitude=1mm,raise=-1mm}]
\tikzstyle{left hook arrow}=[left hook-latex]
\tikzstyle{right hook arrow}=[right hook-latex]
\tikzstyle{black dot}=[inner sep=0.7mm,minimum width=0pt,minimum height=0pt,fill=black,draw=black,shape=circle]
\tikzstyle{dot}=[black dot]
\tikzstyle{smalldot}=[inner sep=0.4mm,minimum width=0pt,minimum height=0pt,fill=black,draw=black,shape=circle]%NEW
\tikzstyle{white dot}=[dot,fill=white]
\tikzstyle{antipode}=[white dot,inner sep=0.3mm,font=\footnotesize]
\tikzstyle{smallwhitedot}=[smalldot,fill=white]%NEW
\tikzstyle{alt white dot}=[white dot,label={[xshift=3.07mm,yshift=-0.05mm,font=\footnotesize]left:$*$}]
\tikzstyle{gray dot}=[dot,fill=gray!40!white]
\tikzstyle{smallgraydot}=[smalldot,fill=gray!40!white]%NEW
\tikzstyle{box vertex}=[draw=black,rectangle]
\tikzstyle{small box}=[box vertex,fill=white]%% added rwd]
\tikzstyle{whitebg}=[fill=white,inner sep=2pt]
\tikzstyle{graph state vertex}=[sg vertex,fill=black]
\tikzstyle{wide copoint}=[fill=white,draw=black,shape=isosceles triangle,shape border rotate=90,isosceles triangle stretches=true,inner sep=1pt,minimum width=1.5cm,minimum height=5mm]
\tikzstyle{wide point}=[fill=white,draw=black,shape=isosceles triangle,shape border rotate=-90,isosceles triangle stretches=true,inner sep=1pt,minimum width=1.5cm,minimum height=4mm]
\tikzstyle{very wide copoint}=[fill=white,draw=black,shape=isosceles triangle,shape border rotate=-90,isosceles triangle stretches=true,inner sep=1pt,minimum width=2.5cm,minimum height=4mm]
\tikzstyle{very wide empty copoint}=[draw=black,shape=isosceles triangle,shape border rotate=-90,isosceles triangle stretches=true,inner sep=1pt,minimum width=2.5cm,minimum height=4mm]
\tikzstyle{symm}=[ultra thick,shorten <=-1mm,shorten >=-1mm]
\tikzstyle{square box}=[rectangle,fill=white,draw=black,minimum height=5mm,minimum width=5mm,font=\small]
\tikzstyle{square gray box}=[rectangle,fill=gray!30,draw=black,minimum height=6mm,minimum width=6mm]
\tikzstyle{copoint}=[regular polygon,regular polygon sides=3,draw=black,scale=0.75,inner sep=-0.5pt,minimum width=7mm,fill=white]
\tikzstyle{point}=[regular polygon,regular polygon sides=3,draw=black,scale=0.75,inner sep=-0.5pt,minimum width=7mm,fill=white,regular polygon rotate=180]
\tikzstyle{gray point}=[point,fill=gray!40!white]
\tikzstyle{gray copoint}=[copoint,fill=gray!40!white]
\newcommand{\edgearrow}{{\arrow[black]{>}}}
\newcommand{\edgetick}{{\arrow[black,scale=0.7,very thick]{|}}}
\tikzstyle{diredge}=[->]
\tikzstyle{rdiredge}=[<-]
\tikzstyle{medium diredge}=[->]
\tikzstyle{short diredge}=[->]
\tikzstyle{halfedge}=[-)]
\tikzstyle{other halfedge}=[(-]
\tikzstyle{freeedge}=[(-)]
\tikzstyle{white edge}=[line width=5pt,white]
\tikzstyle{tick}=[postaction=decorate,decoration={markings, mark=at position 0.5 with \edgetick}]
\tikzstyle{small map edge}=[|-latex, gray!60!blue, shorten <=0.9mm, shorten >=0.5mm]
\tikzstyle{thick dashed edge}=[very thick,dashed,gray!40]
\tikzstyle{map edge}=[|-latex,very thick, gray!40, shorten <=1mm, shorten >=0.5mm]
\tikzstyle{tickedge}=[postaction=decorate,
\tikzstyle{dirtickedge}=[postaction=decorate,
\tikzstyle{dirdoubletickedge}=[postaction=decorate,
\newcommand{\boxshape}[3]{%
\pgfdeclareshape{#1}{
\inheritsavedanchors[from=rectangle] % this is nearly a rectangle
\inheritanchorborder[from=rectangle]
\inheritanchor[from=rectangle]{center}
\inheritanchor[from=rectangle]{north}
\inheritanchor[from=rectangle]{south}
\inheritanchor[from=rectangle]{west}
\inheritanchor[from=rectangle]{east}
% ... and possibly more
\backgroundpath{% this is new
% store lower right in xa/ya and upper right in xb/yb
\southwest \pgf@xa=\pgf@x \pgf@ya=\pgf@y
\northeast \pgf@xb=\pgf@x \pgf@yb=\pgf@y

\@tempdima=#2
\@tempdimb=#3

\pgfpathmoveto{\pgfpoint{\pgf@xa - 5pt + \@tempdima}{\pgf@ya}}
\pgfpathlineto{\pgfpoint{\pgf@xa - 5pt - \@tempdima}{\pgf@yb}}
\pgfpathlineto{\pgfpoint{\pgf@xb + 5pt + \@tempdimb}{\pgf@yb}}
\pgfpathlineto{\pgfpoint{\pgf@xb + 5pt - \@tempdimb}{\pgf@ya}}
\pgfpathlineto{\pgfpoint{\pgf@xa - 5pt + \@tempdima}{\pgf@ya}}
\pgfpathclose
}
}}
\tikzstyle{map}=[draw,shape=NEbox,inner sep=7pt]
\tikzstyle{mapdag}=[draw,shape=SEbox,inner sep=7pt]
\tikzstyle{maptrans}=[draw,shape=SWbox,inner sep=7pt]
\tikzstyle{mapconj}=[draw,shape=NWbox,inner sep=7pt]
\tikzstyle{probs}=[shape=semicircle,fill=gray!40!white,draw=black,shape border rotate=180,minimum width=1.2cm]
\tikzstyle{arrs}=[-latex,font=\small,auto]
\tikzstyle{arrow plain}=[arrs]
\tikzstyle{arrow dashed}=[dashed,arrs]
\tikzstyle{arrow bold}=[very thick,arrs]
\tikzstyle{arrow hide}=[draw=white!0,-]
\tikzstyle{arrow reverse}=[latex-]
\tikzstyle{cdnode}=[]
\tikzstyle{gn}=[dot,fill=lime!50,minimum width=0.2cm,inner sep=0.5pt,font=\footnotesize]
\tikzstyle{rn}=[dot,fill=red!50,inner sep=0.5pt,minimum width=0.2cm,font=\footnotesize]
\tikzstyle{bn}=[dot,fill=blue,minimum width=0.3cm]
\tikzstyle{rc}=[dot,thick,fill=white,draw = red,minimum width=0.2cm,inner sep=0.5pt,font=\footnotesize]
\tikzstyle{gc}=[dot,thick,fill=white,draw= lime,inner sep=0.5pt,minimum width=0.2cm,font=\footnotesize]
\tikzstyle{bc}=[dot,thick,fill=white,draw= blue,minimum width=0.3cm]
\tikzstyle{label}=[circle,fill=white,minimum width=0.3cm]
\tikzstyle{H box}=[rectangle,fill=yellow,draw=black,xscale=1,yscale=1,font=\small,inner sep=0.75pt,minimum width=0.15cm,minimum height=0.15cm]
\tikzstyle{clocklabel}=[dot,fill=yellow,draw=black,font=\tiny,inner sep=0.75pt]
\tikzstyle{rsn}=[circle split,draw,fill=red,font=\tiny,inner sep=0.75pt]
\tikzstyle{gsn}=[circle split,draw,fill=lime,font=\tiny,inner sep=0.75pt]
\tikzstyle{bsn}=[circle split,draw,fill=blue,font=\tiny,inner sep=0.75pt]
\tikzstyle{rsc}=[circle split,thick,draw= red,draw,fill=white,font=\tiny,inner sep=0.75pt]
\tikzstyle{gsc}=[circle split,thick,draw= lime,draw,fill=white,font=\tiny,inner sep=0.75pt]
\tikzstyle{bsc}=[circle split,thick,draw= blue,draw,fill=white,font=\tiny,inner sep=0.75pt]
\tikzstyle{cnot}=[fill=white,shape=circle,inner sep=-1.4pt]
\tikzstyle{wire label}=[font=\tiny, auto]
\newcommand{\denoteb}[1]{% --``semantic'' brakets
\left\llbracket #1 \right\rrbracket}
\newcommand{\denote}[1]{% --``semantic'' brakets
\llbracket #1 \rrbracket} 
\newcommand{\dempty}{%
\beginpgfgraphicnamed{scalars-s//emptysquare-small}
\InputIfFileExists{scalars-s//emptysquare-small.tikz}{}{\input{./figures/scalars-s//emptysquare-small.tikz}}
\endpgfgraphicnamed}
\newcommand{\drcup}{\raisebox{-0.15cm}{%
\beginpgfgraphicnamed{scalars-s//cup}
\InputIfFileExists{scalars-s//cup.tikz}{}{\input{./figures/scalars-s//cup.tikz}}
\endpgfgraphicnamed}}
\newcommand{\dsigmaf}{%
\beginpgfgraphicnamed{scalars-s//swapf}
\InputIfFileExists{scalars-s//swapf.tikz}{}{\input{./figures/scalars-s//swapf.tikz}}
\endpgfgraphicnamed}
\newcommand{\dHf}{%
\beginpgfgraphicnamed{scalars-s//Had4f}
\begin{tikzpicture}
	\begin{pgfonlayer}{nodelayer}
		\node [style={H box}] (0) at (0.5, 0) {};
		\node [style=none] (1) at (0.5, 0.3) {};
		\node [style=none] (2) at (0.5, -0.3) {};
	\end{pgfonlayer}
	\begin{pgfonlayer}{edgelayer}
		\draw (1.center) to (0);
		\draw (2.center) to (0);
	\end{pgfonlayer}
\end{tikzpicture}}
\newcommand{\drcupf}{\raisebox{-0.15cm}{%
\beginpgfgraphicnamed{scalars-s//cupf}
\InputIfFileExists{scalars-s//cupf.tikz}{}{\input{./figures/scalars-s//cupf.tikz}}
\endpgfgraphicnamed}}
\newcommand{\drcapf}{\raisebox{0.15cm}{%
\beginpgfgraphicnamed{scalars-s//capf}
\InputIfFileExists{scalars-s//capf.tikz}{}{\input{./figures/scalars-s//capf.tikz}}
\endpgfgraphicnamed}}
\newcommand{\did}{%
\beginpgfgraphicnamed{scalars-s//Id}
\begin{tikzpicture}
	\begin{pgfonlayer}{nodelayer}
		\node [style=none] (1) at (0.5, 0.3) {};
		\node [style=none] (2) at (0.5, -0.3) {};
		%\node [style=none] (3) at (0.5, -0.5) {};
		%\node [style=none] (4) at (0.5, 0.5) {};
	\end{pgfonlayer}
	\begin{pgfonlayer}{edgelayer}
		\draw (1.center) to (2.center);
	\end{pgfonlayer}
\end{tikzpicture}}
\newcommand{\didf}{%
\beginpgfgraphicnamed{scalars-s//Idf}
\begin{tikzpicture}
	\begin{pgfonlayer}{nodelayer}
		\node [style=none] (1) at (0.5, 0.3) {};
		\node [style=none] (2) at (0.5, -0.3) {};
		%\node [style=none] (3) at (0.5, -0.5) {};
		%\node [style=none] (4) at (0.5, 0.5) {};
	\end{pgfonlayer}
	\begin{pgfonlayer}{edgelayer}
		\draw (1.center) to (2.center);
	\end{pgfonlayer}
\end{tikzpicture}}
\newcommand{\gdzz}{%
\beginpgfgraphicnamed{scalars-s//RZ00alpha}
\begin{tikzpicture}
	\begin{pgfonlayer}{nodelayer}
		\node [style=gn] (0) at (0.5, 0) {\footnotesize$\alpha$};
	\end{pgfonlayer}
\end{tikzpicture}}
\newcommand{\gdzo}{%
\beginpgfgraphicnamed{scalars-s//RZ01alpha}
\begin{tikzpicture}
	\begin{pgfonlayer}{nodelayer}
		\node [style=gn] (0) at (0.5, 0.1) {\footnotesize$\alpha$};
		\node [style=none] (1) at (0.5, -0.2) {};
	\end{pgfonlayer}
	\begin{pgfonlayer}{edgelayer}
		\draw (1.center) to (0.center);
	\end{pgfonlayer}

\end{tikzpicture}}
\newcommand{\gdoo}{%
\beginpgfgraphicnamed{scalars-s//RZ11alpha}
\begin{tikzpicture}
	\begin{pgfonlayer}{nodelayer}
		\node [style=gn] (0) at (0.5, 0) {\footnotesize$\alpha$};
		\node [style=none] (1) at (0.5, -0.3) {};
		\node [style=none] (2) at (0.5, 0.3) {};
	\end{pgfonlayer}
	\begin{pgfonlayer}{edgelayer}
		\draw (1.center) to (0.center);
				\draw (2.center) to (0.center);
	\end{pgfonlayer}

\end{tikzpicture}}
\newcommand{\gdto}{%
\beginpgfgraphicnamed{scalars-s//RZ21alpha}
\InputIfFileExists{scalars-s//RZ21alpha.tikz}{}{\input{./figures/scalars-s//RZ21alpha.tikz}}
\endpgfgraphicnamed}
\newcommand{\gpi}{%
\beginpgfgraphicnamed{scalars-s//RZ00pi}
\begin{tikzpicture}
	\begin{pgfonlayer}{nodelayer}
		\node [style=gn] (0) at (0.5, 0) {\footnotesize$\pi$};
	\end{pgfonlayer}
\end{tikzpicture}}
\newcommand{\gdot}{%
\beginpgfgraphicnamed{scalars-s//RZ00zero}
\begin{tikzpicture}
	\begin{pgfonlayer}{nodelayer}
		\node [style=gn] (0) at (0.5, 0) {};
	\end{pgfonlayer}
\end{tikzpicture}
}
\newcommand{\rdzz}{%
\beginpgfgraphicnamed{scalars-s//RX00alpha}
\begin{tikzpicture}
	\begin{pgfonlayer}{nodelayer}
		\node [style=rn] (0) at (0.5, 0) {\footnotesize$\alpha$};
	\end{pgfonlayer}
\end{tikzpicture}}
\newcommand{\rdzo}{%
\beginpgfgraphicnamed{scalars-s//RX01alpha}
\begin{tikzpicture}
	\begin{pgfonlayer}{nodelayer}
		\node [style=rn] (0) at (0.5, 0.1) {\footnotesize$\alpha$};
		\node [style=none] (1) at (0.5, -0.2) {};
	\end{pgfonlayer}
	\begin{pgfonlayer}{edgelayer}
		\draw (1.center) to (0.center);
	\end{pgfonlayer}

\end{tikzpicture}}
\newcommand{\rdoo}{%
\beginpgfgraphicnamed{scalars-s//RX11alpha}
\begin{tikzpicture}
	\begin{pgfonlayer}{nodelayer}
		\node [style=rn] (0) at (0.5, 0) {\footnotesize$\alpha$};
		\node [style=none] (1) at (0.5, -0.3) {};
		\node [style=none] (2) at (0.5, 0.3) {};
	\end{pgfonlayer}
	\begin{pgfonlayer}{edgelayer}
		\draw (1.center) to (0.center);
				\draw (2.center) to (0.center);
	\end{pgfonlayer}

\end{tikzpicture}}
\tikzstyle{cdiag}=[matrix of math nodes, row sep=3em, column sep=3em, text height=1.5ex, text depth=0.25ex,inner sep=0.5em]
\tikzstyle{arrow above}=[transform canvas={yshift=0.5ex}]
\tikzstyle{arrow below}=[transform canvas={yshift=-0.5ex}]
\newcommand{\zxs}{\textup{ZX}_{\textsf{simp}}}
\newcommand{\zxc}{\textup{ZX}_{\textsf{ccc}}}
\newcommand{\zxb}{\textup{ZX}_{\textsf{bac}}}
\newcommand{\ZX}{\textsc{zx}}
\newcommand{\intf}[1]{\ensuremath{\left\llbracket #1 \right\rrbracket}} % interpretation functor
\tikzstyle{bscalar}=[star,fill=black,draw=black,minimum size=8pt,inner sep=0pt]
\tikzstyle{emptynode}=[circle,fill=none,draw=none,inner sep=1.5pt]
\tikzset{%
  show node name/.code={%
    \expandafter\def\expandafter\tikz@mode\expandafter{\tikz@mode\show\tikz@fig@name}%
    }
}
\begin{document}

%\title{Towards a Minimal Stabilizer \ZX-calculus\footnote{ACM subject classification: F.1.1 Models of Computation, F.3.2. Semantics of Programming Languages}}
\title{Towards a Minimal Stabilizer ZX-calculus}
\author[M.~Backens]{Miriam Backens}
%\institute{School of Computer Science,\\ University of Birmingham, UK}
\address{School of Computer Science, University of Birmingham, UK}
\email{m.backens@cs.bham.ac.uk}

%\and Simon Perdrix
\author[S.~Perdrix]{Simon Perdrix}
%\institute{CNRS, LORIA, Inria-Mocqua\\ Universit\'e de Lorraine,\\ France}
\address{CNRS, LORIA, Inria-Mocqua; Universit\'e de Lorraine, France}
\email{simon.perdrix@loria.fr}

%\and Quanlong Wang
\author[Q.~Wang]{Quanlong Wang}
%\institute{Department of Computer Science,\\ University of Oxford,  \\
%Cambridge Quantum Computing Ltd., \\UK}
\address{Department of Computer Science, University of Oxford;  
Cambridge Quantum Computing Ltd., UK}
\email{quanlong.wang@cs.ox.ac.uk}

\begin{abstract}
The stabilizer \ZX-calculus is a rigorous graphical language for reasoning about quantum mechanics. The language is sound and complete: one can transform a stabilizer \ZX-diagram into another one using the graphical rewrite rules if and only if these two diagrams represent the same quantum evolution or quantum state. We previously showed that the stabilizer \ZX-calculus can be simplified by reducing the number of rewrite rules, without losing the property of completeness [Backens, Perdrix \& Wang, \textit{EPTCS}~236:1--20, 2017]. Here, we show that most of the remaining rules of the language are indeed necessary. We do however leave as an open question the necessity of two rules. These include, surprisingly, the bialgebra rule, which is an axiomatisation of complementarity, the cornerstone of the \ZX-calculus. Furthermore, we show that a weaker ambient category -- a braided autonomous category instead of the usual compact closed category -- is sufficient to recover the meta rule `only connectivity matters', even without assuming any symmetries of the generators.
\end{abstract}

\maketitle

\section{Introduction}

The \ZX-calculus is a high-level and intuitive graphical language for pure qubit quantum mechanics (QM), based on category theory \cite{coecke_interacting_2011}. It comes with a set of rewrite rules that potentially allow this graphical calculus to be used to replace matrix-based formalisms entirely for certain classes of problems. However, this replacement is only possible without losing deductive power if the \ZX-calculus is \emph{complete} for this class of problems, i.e.\ if any equality that is derivable using matrices can also be derived graphically.

The first fragment of the \ZX-calculus shown to be complete was the \emph{stabilizer \ZX-calculus} \cite{backens_zx-calculus_2013}. This fragment consists of the \ZX-diagrams involving angles which are multiples of $\pi/2$ only. The fragment of quantum theory that can be represented by stabilizer \ZX-diagrams is the so-called stabilizer quantum mechanics \cite{gottesman_stabilizer_1997}. Stabilizer QM is a non trivial fragment of quantum mechanics which is in fact efficiently classically simulatable \cite{gottesman_heisenberg_1998}   but which nevertheless exhibits
many important quantum properties, like entanglement and non-locality. It is furthermore of central importance in areas such as quantum error correcting codes \cite{nielsen_quantum_2010} and measurement-based quantum computation \cite{raussendorf_one-way_2001}.

A subset of these rules is also complete for the single-qubit Clifford+T group \cite{backens_zx-calculus_2014}.
Other fragments of the \ZX-calculus have recently been completed, these include the full Clifford+T fragment \cite{jeandel_complete_2017} as well as the full \ZX-calculus \cite{HNW,JPV-universal,ZXNormalForm,euler-zx}. The language can also be extended to capture mixed-state quantum mechanics \cite{carette2019completeness}. Nevertheless, we focus here on the stabilizer \ZX-calculus because it is the core of the overall language: all the fundamental structures -- e.g.\ the axiomatisation of complementary bases \cite{coecke_interacting_2011} -- are present in this fragment. The rule sets for larger parts of the formalism include the rules of the stabilizer \ZX-calculus with only minor modifications.

Now that the question of completeness has been resolved, we turn our attention to simplifying the \ZX-calculus, removing unnecessary equations while keeping only the essential axioms. This process simplifies the development, and potentially the efficiency, of automated tools for quantum reasoning, e.g.\ Quantomatic \cite{quanto}.

In a preliminary version of this work \cite{BPW16}, we gave a set of axioms that is significantly smaller than the usual one, containing just nine explicit rewrite rules. Previous rule sets usually contained about a dozen explicit rules and used the convention that any rule also holds with the colours red and green swapped or with the diagrams flipped upside-down, effectively nearly quadrupling the available set of rewrite rules.\footnote{Some rules are symmetric under the operations of swapping the colours and/or flipping them upside-down, hence the effective rule set is not quite four times the size of the explicitly-given one.} We showed that the colour symmetric and upside-down versions of the remaining rewrite rules can in fact be derived, so the convention is no longer required.

Here, we extend this work by showing that most of the remaining rules are indeed necessary, i.e.\ they cannot be derived from the other rules. Yet for two rules, the question of their necessity remains open; this includes the bialgebra rule which formalises the notion of complementary bases and thus plays core role in the language.

Furthermore, we consider the `only the connectivity matters' rule, which means that two diagrams represent the same matrix whenever one can be transformed into the other by moving components around without changing their connections. This meta-rule is an essential property of quantum diagrammatic reasoning, and refines the axioms of the ambient compact closed category. Indeed, the axioms of a compact closed category guarantee that two isomorphic diagrams are equivalent \cite{selinger_survey_2010}. The `only the connectivity matters' meta-rule  implies additionally that any two inputs or outputs of a generator can be freely exchanged.
We show that a single additional explicit rewrite rule is  sufficient to derive the symmetries of the generators, and thus the meta-rule `only the connectivity matters', from the simplified stabilizer \ZX-calculus together with the axioms of the ambient compact closed category (Section \ref{s:compact_closed}). More surprisingly, we show that a weaker ambient category is enough, namely a braided autonomous category (Section \ref{s:braided_category}). Graphically, this means that  3-dimensional isotopy is enough to derive the `only the connectivity matters' meta-rule.  

A preliminary version of this work has been published in the proceedings of the QPL'16 conference \cite{BPW16}. Soundness and completeness of the simplified \ZX-calculus are proved in \cite{BPW16}, together with the minimality of the scalar axioms (IV$'$) and (ZO$'$). In the present extended version, we prove the necessity of (almost) all the other rules of the language (section \ref{necessity}), and we also consider the simplification of the ambient category (section \ref{s:simplified_category}).

\section{A Simplified Stabilizer \ZX-calculus}\label{sec:zx}

The \ZX-calculus is a graphical language based on categorical quantum mechanics. This graphical notation is made rigorous by the underlying category theory \cite{coecke_interacting_2011,selinger_dagger_2007}.
For a less category-theoretical introduction to graphical languages of this type, see \cite{coecke_picturing_2017}.
We will examine the underlying category theory in more detail in Section~\ref{s:simplified_category}.

In this paper, we focus on the stabilizer fragment of the \ZX-calculus, as that encompasses many important aspects of the full language while also being complete. We introduce first the components of \ZX-diagrams and their interpretations, and then the rules of the language.

\subsection{Diagrams and standard interpretation}
\label{s:generators}

A diagram $D:k\to l$ of the stabilizer \ZX-calculus with $k$ inputs and $l$ outputs is generated by:
\begin{center}
\begin{tabular}{|r@{~}r@{~}c@{~}c|r@{~}r@{~}c@{~}c|}
\hline
$R_Z^{(n,m)}(\alpha)$&$:$&$n\to m$ & %
\beginpgfgraphicnamed{scalars-s//spideralpha}
\InputIfFileExists{scalars-s//spideralpha.tikz}{}{\input{./figures/scalars-s//spideralpha.tikz}}
\endpgfgraphicnamed & $R_X^{(n,m)}(\alpha)$&$:$&$ n\to m$& %
\beginpgfgraphicnamed{scalars-s//spiderredalpha}
\InputIfFileExists{scalars-s//spiderredalpha.tikz}{}{\input{./figures/scalars-s//spiderredalpha.tikz}}
\endpgfgraphicnamed\\
\hline
$H$&$:$&$1\to 1$ &%
\beginpgfgraphicnamed{scalars-s//Had4}
\InputIfFileExists{scalars-s//Had4.tikz}{}{\input{./figures/scalars-s//Had4.tikz}}
\endpgfgraphicnamed
 & $e $&$:$&$0 \to 0$ &%
\beginpgfgraphicnamed{scalars-s//emptysquare-small}
\InputIfFileExists{scalars-s//emptysquare-small.tikz}{}{\input{./figures/scalars-s//emptysquare-small.tikz}}
\endpgfgraphicnamed \\\hline
  $\sigma$&$:$&$ 2\to 2$& %
\beginpgfgraphicnamed{scalars-s//swap}
\InputIfFileExists{scalars-s//swap.tikz}{}{\input{./figures/scalars-s//swap.tikz}}
\endpgfgraphicnamed &$\mathbb I$&$:$&$1\to 1$&%
\beginpgfgraphicnamed{scalars-s//Id}
}
\endpgfgraphicnamed \\\hline
  $\epsilon$&$:$&$2\to 0$& %
\beginpgfgraphicnamed{scalars-s//cup}
\InputIfFileExists{scalars-s//cup.tikz}{}{\input{./figures/scalars-s//cup.tikz}}
\endpgfgraphicnamed &$\eta$&$:$&$ 0\to 2$&  %
\beginpgfgraphicnamed{scalars-s//cap}
\InputIfFileExists{scalars-s//cap.tikz}{}{\input{./figures/scalars-s//cap.tikz}}
\endpgfgraphicnamed
  \\\hline
\end{tabular}
\end{center}
where $m,n\in \mathbb N$, $\alpha \in \{\frac{k\pi}{2} | k\in\mathbb{Z}\}$, and $e$ is denoted by an empty diagram. Because of their many `legs', red and green dots are often called `spiders'.

When equal to $0$, the phase angles of the green and red dots may be omitted:
\[
\beginpgfgraphicnamed{scalars-s//spiderg}
\InputIfFileExists{scalars-s//spiderg.tikz}{}{\input{./figures/scalars-s//spiderg.tikz}}
\endpgfgraphicnamed := %
\beginpgfgraphicnamed{scalars-s//spidergz}
\InputIfFileExists{scalars-s//spidergz.tikz}{}{\input{./figures/scalars-s//spidergz.tikz}}
\endpgfgraphicnamed\qquad\qquad%
\beginpgfgraphicnamed{scalars-s//spiderr}
\InputIfFileExists{scalars-s//spiderr.tikz}{}{\input{./figures/scalars-s//spiderr.tikz}}
\endpgfgraphicnamed := %
\beginpgfgraphicnamed{scalars-s//spiderrz}
\InputIfFileExists{scalars-s//spiderrz.tikz}{}{\input{./figures/scalars-s//spiderrz.tikz}}
\endpgfgraphicnamed
\]

These components can be combined using the following two operations:
\begin{itemize}
\item Spacial composition: for any $D_1:a\to b$ and $D_2: c\to d$, $D_1\otimes D_2 : a+c\to b+d$ is constructed by placing $D_1$ and $D_2$ side-by-side, $D_2$ to the right of $D_1$.
\item Sequential composition: for any $D_1:a\to b$ and $D_2: b\to c$, $D_2\circ D_1 : a\to c$ is constructed by placing $D_1$ above $D_2$, connecting the outputs of $D_1$ to the inputs of $D_2$.
\end{itemize}

Spatial and sequential compositions satisfy that for any $D_1:a\to b$, $D_2:c\to d$, $D_3:b\to f$, and $D_4:d\to g$, $(D_3\otimes D_4)\circ (D_1\otimes D_2) = (D_3\circ D_1)\otimes (D_4\circ D_2)$. In other words, the \ZX-diagrams form a strict monoidal category which has natural numbers as objects: a diagram with $n$ inputs and $m$ outputs is a morphism $n\to m$, and the identity is inductively defined as $1_0 = e$ and $1_{1+n} = \mathbb I  \otimes 1_n$.   This property ensures that the standard interpretation of \ZX-diagrams, which we will now introduce, is well-defined: different ways of decomposing the same diagram in order to interpret it all yield the same interpretation \cite{coecke_interacting_2011}.

The standard interpretation associates with any \ZX-diagram $D:n\to m$ a linear map $\denoteb{D}:\mathbb C^{2^n}\to \mathbb C^{2^m}$, where $\mathbb{C}$ denotes the complex numbers. The interpretation is inductively defined as follows:
\[
 \begin{array}{rclcrclcrcl}
  \denoteb{D_1\otimes D_2}&:=&\denoteb{D_1}\otimes \denoteb{D_2} &\qquad&
  \denoteb{~\dHf~}&:=&\frac{1}{\sqrt{2}}\left(\begin{array}{@{}c@{}r@{}}1&1\\1&{~\text{-}}1\end{array}\right) &\qquad&
  \denoteb{~\dempty~}&:=&1 \\
  \denoteb{D_2\circ D_1}&:=&\denoteb{D_2}\circ\denoteb{D_1} &&
 &
 &&&
  \denote{\drcupf}&:=&\left(\begin{array}{@{}c@{}c@{}@{}c@{}c@{}}1&0&0&1\end{array}\right) \\
  \denoteb{~\dsigmaf~}&:=&\left(\begin{array}{@{}c@{}r@{}@{}c@{}c@{}}1&0&0&0\\0&0&1&0\\0&1&0&0\\0&0&0&1\\\end{array}\right) &&
  \denoteb{~\did~}&:=&\left(\begin{array}{@{}c@{}c@{}}1&0\\0&1\end{array}\right) &&
  \denoteb{\drcapf}&:=&\left(\begin{array}{@{}c@{}}1\\0\\0\\1\end{array}\right)
 \end{array}
\]

For green dots, $\denote{R_Z^{(0,0)}(\alpha)}:= 1{+}e^{i\alpha}$, and when $a{+}b>0$, $\denote{R_Z^{(a,b)}(\alpha)}$ is a matrix with $2^a$ columns and $2^b$ rows such that all entries are $0$ except the top left one which is $1$ and the bottom right one which is $e^{i\alpha}$, e.g.:
\[
 \def\arraystretch{0.5}
\denoteb{~\gdzz~} = 1+e^{i\alpha} \qquad \denoteb{~\gdzo~} = \left(\begin{array}{@{}c@{}}1\\e^{i\alpha}\end{array}\right) \qquad \denoteb{~\gdoo~} = \left(\begin{array}{@{}c@{}c@{}}1&~0~\\0&~e^{i\alpha}\end{array}\right) \qquad  \denoteb{\gdto} = \left(\begin{array}{@{}c@{}c@{}c@{}c@{}}1~&~0~&~0~&~0~\\0~&~0~&~0~&~e^{i\alpha}\end{array}\right)
\]

\noindent For any $a,b\ge  0$, $\denote{R_X^{a,b}(\alpha)}:= \denote{H}^{\otimes b}\circ \denote{R_Z^{a,b}(\alpha)} \circ \denote{H}^{\otimes a}$, where $M^{\otimes 0} =1$ and for any $k>0$, $M^{\otimes k}=M\otimes M^{\otimes k-1}$. E.g.,
\[
 \def\arraystretch{0.5}
 \denoteb{~\rdzz~} = 1+e^{i\alpha} \qquad \denoteb{~\rdzo~}= \sqrt2 e^{i\frac \alpha 2}\!\left(\begin{array}{@{}r@{}}\cos(\nicefrac{\alpha}2)\\\text{-}i\sin(\nicefrac{\alpha}2)\end{array}\right) \qquad \denoteb{~\rdoo~} = e^{i\frac \alpha 2}\! \left(\begin{array}{@{}r@{}r@{}}\cos(\nicefrac{\alpha}2)&\text{~~-}i\sin(\nicefrac{\alpha}2)\\\text{-}i\sin(\nicefrac{\alpha}2)&\cos(\nicefrac{\alpha}2)\end{array}\right)
\]

For a more involved example, consider the following diagram:
\[
\beginpgfgraphicnamed{scalars//interpretation_example}
\InputIfFileExists{scalars//interpretation_example.tikz}{}{\input{./figures/scalars//interpretation_example.tikz}}
\endpgfgraphicnamed
\]
Its standard interpretation can be found as follows:
\begin{align*}
 \denoteb{%
\beginpgfgraphicnamed{scalars//interpretation_example}
\InputIfFileExists{scalars//interpretation_example.tikz}{}{\input{./figures/scalars//interpretation_example.tikz}}
\endpgfgraphicnamed} &= \left( \denoteb{%
\beginpgfgraphicnamed{scalars//RZ21zero}
\InputIfFileExists{scalars//RZ21zero.tikz}{}{\input{./figures/scalars//RZ21zero.tikz}}
\endpgfgraphicnamed} \otimes \denoteb{~%
\beginpgfgraphicnamed{scalars-s//Id}
}
\endpgfgraphicnamed~} \right) \circ \left( \denoteb{%
\beginpgfgraphicnamed{scalars//RX11pi_2}
\begin{tikzpicture}
	\begin{pgfonlayer}{nodelayer}
		\node [style=rn] (0) at (0.5, 0) {\footnotesize$\pi/2$};
		\node [style=none] (1) at (0.5, -0.5) {};
		\node [style=none] (2) at (0.5, 0.5) {};
	\end{pgfonlayer}
	\begin{pgfonlayer}{edgelayer}
		\draw (1.center) to (0.center);
				\draw (2.center) to (0.center);
	\end{pgfonlayer}

\end{tikzpicture}
}
\endpgfgraphicnamed} \otimes \denoteb{%
\beginpgfgraphicnamed{scalars//RX12zero}
\InputIfFileExists{scalars//RX12zero.tikz}{}{\input{./figures/scalars//RX12zero.tikz}}
\endpgfgraphicnamed} \right) \\
 &= \left( \begin{pmatrix} 1&0&0&0\\0&0&0&1 \end{pmatrix} \otimes \begin{pmatrix} 1&0\\0&1 \end{pmatrix} \right) \circ \left( e^{i\frac \pi 4}\! \left(\begin{array}{@{}r@{}r@{}}\cos(\nicefrac{\pi}4)&\text{~~-}i\sin(\nicefrac{\pi}4)\\\text{-}i\sin(\nicefrac{\pi}4)&\cos(\nicefrac{\pi}4)\end{array}\right) \otimes \frac{1}{\sqrt{2}} \begin{pmatrix}1&0\\0&1\\0&1\\1&0\end{pmatrix} \right) \\
 &= \frac{e^{i\frac \pi 4}}{2} \begin{pmatrix} 1&0&-i&0 \\ 0&1&0&-i \\ 0&-i&0&1 \\ -i&0&1&0 \end{pmatrix}
\end{align*}
The category-theoretical underpinnings of the language ensure that all the different decompositions of a diagram yield the same interpretation.

\begin{rem}
 Of the three kinds of generators -- $R_Z^{(n,m)}$, $R_X^{(n,m)}$, and $H$ -- one could be eliminated without losing any expressive power.
 We nevertheless keep all three kinds of generators here, both for reasons of tradition and because this makes reasoning simpler.
 This approach is not inconsistent with the notion of working towards a minimal version of the stabilizer \ZX-calculus: we are looking for a version of the calculus where all rewrite rules are provably necessary, rather than the version with the smallest possible number of rules.
\end{rem}

The linear maps that can be represented by stabilizer \ZX-diagrams correspond to the so-called stabilizer fragment of quantum mechanics \cite{gottesman_stabilizer_1997}, which is generated by state preparations and measurements in the computational basis together with the group of Clifford unitaries.
All Clifford unitaries arise as quantum circuits over the gates
\[
 S = \begin{pmatrix}1&0\\0&i\end{pmatrix}, \qquad H = \frac{1}{\sqrt{2}} \begin{pmatrix}1&1\\1&-1\end{pmatrix} \quad\text{and}\quad C_X = \begin{pmatrix}1&0&0&0\\0&1&0&0\\0&0&0&1\\0&0&1&0\end{pmatrix}.
\]

Note that \ZX-diagrams with arbitrary angles (no longer necessarily multiples of $\frac \pi 2$) are universal:  for any $m,n\ge 0$ and any linear map $M:\mathbb C^{2^n}\to \mathbb C^{2^m}$, there exists a diagram $D:n\to m$ such that $\denote{D} = M$ \cite{coecke_interacting_2011}. When restricted to angles that are multiples of $\pi/4$, \ZX-diagrams are approximately universal, i.e.\ any linear map can approximated to arbitrary accuracy by such a \ZX-diagram. In this paper, we focus on the core of the \ZX-calculus formed by the stabilizer \ZX-diagrams.

\subsection{The rewrite rules, soundness and completeness}

The \ZX-calculus is not just a notation: it comes with a set of rewrite rules that allow equalities to be derived entirely graphically.
We are considering the stabilizer \ZX-calculus here because it is the fragment with the smallest complete set of rewrite rules. \emph{Complete} here means that any equality that can be derived using matrices can also be derived graphically using that set of rewrite rules \cite{backens_zx-calculus_2013,backens_making_2015}.

In addition to those explicit rewrite rules there is also a meta-rule: \emph{`only connectivity matters'} (previously stated as `only topology matters') \cite[Section~2.2.1]{coecke_interacting_2011},
which means that two diagrams represent the same matrix whenever one diagram can be transformed into the other by moving components around without changing their connections.
To formalise this, we first define a labelled graph associated with any \ZX-diagram.

\begin{defi}\label{dfn:labelled_graph}
 Define a set of labels
 \[
  \textstyle L:=\left\{ R_Z\left(\frac{k\pi}{2}\right) \,\middle|\, k\in\mathbb{Z}\right\} \cup \left\{ R_X\left(\frac{k\pi}{2}\right) \,\middle|\, k\in\mathbb{Z}\right\} \cup \left\{H \right\} \cup \{I_n\mid n\in\mathbb{N}_{\geq 1}\} \cup \{O_n\mid n\in\mathbb{N}_{\geq 1}\}.
 \]
 Given a \ZX-calculus diagram $D$, let $G_D=(V,E,\ell)$ be the labelled multigraph with vertices $V$, edges $E$, and labelling $\ell$ that arises as follows.
 The multigraph $(V,E)$ consists of:
 \begin{itemize}
  \item one vertex for each dot, Hadamard, input, or output of the diagram, and
  \item one edge for each edge in the original diagram, connecting the vertices corresponding to the endpoints of the original edge.
 \end{itemize}
 The labelling $\ell:V\to L$ is defined as follows:
 \begin{itemize}
  \item each vertex corresponding to a green dot with phase $\alpha$ is labelled $R_Z(\alpha)$,
  \item each vertex corresponding to a red dot with phase $\alpha$ is labelled $R_X(\alpha)$,
  \item each vertex corresponding to a Hadamard is labelled H,
  \item each vertex corresponding to an input has a unique label of the form $I_n$, where $n$ is the index of the input when counting from left to right, and
  \item each vertex corresponding to an output has a unique label of the form $O_n$, where $n$ is the index of the output when counting from left to right. 
 \end{itemize}
\end{defi}

\begin{defi}\label{dfn:connectivity}
 The rule `only connectivity matters' formally means the following:
 Suppose $D_1$ and $D_2$ are two \ZX-calculus diagrams.
 Then the two diagrams are equal if there exists a graph isomorphism from $G_{D_1}=(V_1,E_1,\ell_1)$ to $G_{D_2}=(V_2,E_2,\ell_2)$ which respects the labelling, i.e.\ an invertible map $h: V_1\to V_2$ such that
 \begin{itemize}
  \item if vertices $u,v\in V_1$ are connected by $n$ edges, then $h(u),h(v)$ are connected by $n$ edges, and
  \item for any $v\in V_1$, $\ell_1(v)=\ell_2(h(v))$.
 \end{itemize}
\end{defi}

 Definition~\ref{dfn:connectivity} implies that the category of \ZX-diagrams is symmetric:
\begin{lem}
With `only connectivity matters', the \ZX-diagrams form a symmetric monoidal category where for any $n,m\in \mathbb N$, $\sigma_{n,m}$ is the natural isomorphism inductively defined as: $\sigma_{0,0} := e$, $\sigma_{1,0}:=\mathbb I$, $\sigma_{1,1+m} := (1_1\otimes \sigma_{1,m})\circ(\sigma \otimes 1_m)$ and $\sigma_{2+n,m}:=(\sigma_{1,m}\otimes 1_{1+n})\circ (1_1\otimes \sigma_{1+n,m})$. 
\end{lem}

\begin{proof}
The `only connectivity matters' rule implies that $\sigma_{m,n}\circ \sigma_{n,m} = 1_{n+m}$ and for any $f:n\to n'$ and $g:m\to m'$, $(g\otimes f)\circ\sigma_{n,m} = \sigma_{m',n'}\circ (f\otimes g)$. E.g.~when $n=m'=n'=1$, $m=2$, $f=H$ and $g = R_Z^{(2,1)}(\pi/2)$: 

 \centerline{%
\beginpgfgraphicnamed{scalars-s//swap3}
\InputIfFileExists{scalars-s//swap3.tikz}{}{\input{./figures/scalars-s//swap3.tikz}}
\endpgfgraphicnamed\qquad\qquad%
\beginpgfgraphicnamed{scalars-s//swap4}
\InputIfFileExists{scalars-s//swap4.tikz}{}{\input{./figures/scalars-s//swap4.tikz}}
\endpgfgraphicnamed}
\noindent The other cases are analogous.
\end{proof}

The  `only connectivity matters' rule  also implies that the category of \ZX-diagrams is compact closed:

\begin{lem}
With `only connectivity matters', the \ZX-diagrams form a compact closed category where for any $n\in \mathbb N$, 
$\epsilon_0 = \eta_0 = e$,  $\epsilon_{1+n} = (1\otimes \sigma_{1,2n})\circ (\epsilon\otimes \epsilon_n)$, and $\eta_{1+n} = (\eta\otimes \eta_n)\circ (1\otimes  \sigma_{1,2n})$. 
\end{lem}

\begin{proof}
The `only connectivity matters' rule implies that for any $n\in \mathbb N$, $(\epsilon_n, \eta_n)$ is a compact structure, e.g.~when $n=1$:

 \centerline{%
\beginpgfgraphicnamed{scalars//compactstructure_cap}
\InputIfFileExists{scalars//compactstructure_cap.tikz}{}{\input{./figures/scalars//compactstructure_cap.tikz}}
\endpgfgraphicnamed \qquad\quad %
\beginpgfgraphicnamed{scalars//compactstructure_cup}
\InputIfFileExists{scalars//compactstructure_cup.tikz}{}{\input{./figures/scalars//compactstructure_cup.tikz}}
\endpgfgraphicnamed \qquad\quad %
\beginpgfgraphicnamed{scalars//compactstructure_snake}
\InputIfFileExists{scalars//compactstructure_snake.tikz}{}{\input{./figures/scalars//compactstructure_snake.tikz}}
\endpgfgraphicnamed}
 
\noindent The other cases are analogous
\end{proof}

It is known that compact closed categories enjoy the following graphical characterisation:

\begin{thmC}[{\cite[Theorem 14]{selinger_survey_2010}}]\label{thm:compact-closed}
 A well-formed equation between morphisms in the language of compact closed categories follows from the axioms of compact closed categories if and only if it holds, up to isomorphism of diagrams, in the graphical language.
\end{thmC}

The isomorphism of diagrams in the above theorem differs from the graph isomorphism in Definition~\ref{dfn:connectivity}: the isomorphism of diagrams induced by the axioms of a compact closed category preserves the order of inputs and outputs incident on each node in the diagram, analogous to the way the connectivity rule preserves the order of inputs and outputs of a diagram as a whole.

As a consequence, the `only connectivity matters' rule not only guarantees that the ambient category is compact closed, it also implies additional symmetry properties of the generators, e.g.:
\[
\beginpgfgraphicnamed{scalars-s//commute1}
\InputIfFileExists{scalars-s//commute1.tikz}{}{\input{./figures/scalars-s//commute1.tikz}}
\endpgfgraphicnamed =~%
\beginpgfgraphicnamed{scalars-s//commute2}
\InputIfFileExists{scalars-s//commute2.tikz}{}{\input{./figures/scalars-s//commute2.tikz}}
\endpgfgraphicnamed \qquad\qquad\qquad
\beginpgfgraphicnamed{scalars-s//bendingnew}
\InputIfFileExists{scalars-s//bendingnew.tikz}{}{\input{./figures/scalars-s//bendingnew.tikz}}
\endpgfgraphicnamed ~=~ %
\beginpgfgraphicnamed{scalars-s//nonbending}
\InputIfFileExists{scalars-s//nonbending.tikz}{}{\input{./figures/scalars-s//nonbending.tikz}}
\endpgfgraphicnamed ~=~ %
\beginpgfgraphicnamed{scalars-s//bendingnew2}
\InputIfFileExists{scalars-s//bendingnew2.tikz}{}{\input{./figures/scalars-s//bendingnew2.tikz}}
\endpgfgraphicnamed
   \]

Note that instead of imposing `only connectivity matters' as a rule, one can derive it from the axioms of the ambient category together with some extra axioms, see Section~\ref{s:compact_closed}.

\begin{figure}
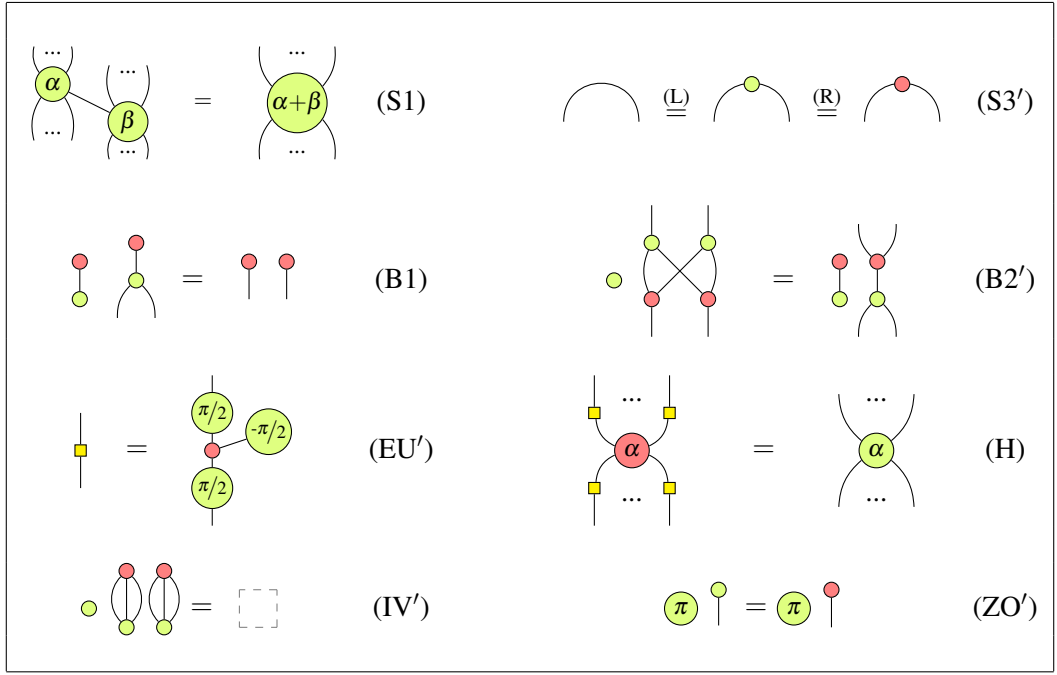

 \centering
 \begin{tabular}{|ccccc|}
  \hline
  &&&& \\
\beginpgfgraphicnamed{scalars//spider-bis}
\InputIfFileExists{scalars//spider-bis.tikz}{}{\input{./figures/scalars//spider-bis.tikz}}
\endpgfgraphicnamed&(S1)&$\qquad$&%
\beginpgfgraphicnamed{scalars//induced_compact_structure}
\InputIfFileExists{scalars//induced_compact_structure.tikz}{}{\input{./figures/scalars//induced_compact_structure.tikz}}
\endpgfgraphicnamed&(S3$'$)\\
  &&&& \\
\beginpgfgraphicnamed{scalars//b1s}
\InputIfFileExists{scalars//b1s.tikz}{}{\input{./figures/scalars//b1s.tikz}}
\endpgfgraphicnamed&(B1)&&%
\beginpgfgraphicnamed{scalars//b2snew}
\InputIfFileExists{scalars//b2snew.tikz}{}{\input{./figures/scalars//b2snew.tikz}}
\endpgfgraphicnamed&(B2$'$)\\
  &&&& \\
\beginpgfgraphicnamed{scalars//HadaDecomSingles-prime}
\InputIfFileExists{scalars//HadaDecomSingles-prime.tikz}{}{\input{./figures/scalars//HadaDecomSingles-prime.tikz}}
\endpgfgraphicnamed&(EU$'$)&&%
\beginpgfgraphicnamed{scalars//h2}
\InputIfFileExists{scalars//h2.tikz}{}{\input{./figures/scalars//h2.tikz}}
\endpgfgraphicnamed&(H)\\
  &&&& \\
\beginpgfgraphicnamed{scalars//dotinverse}
\InputIfFileExists{scalars//dotinverse.tikz}{}{\input{./figures/scalars//dotinverse.tikz}}
\endpgfgraphicnamed &(IV$'$)&&%
\beginpgfgraphicnamed{scalars//zo1-prime}
\InputIfFileExists{scalars//zo1-prime.tikz}{}{\input{./figures/scalars//zo1-prime.tikz}}
\endpgfgraphicnamed&(ZO$'$)\\
  &&&& \\
  \hline
 \end{tabular}
 \caption{Simplified rules for the stabilizer \ZX-calculus, using the conventions that the right-hand side of (IV$'$) is an empty diagram and that ellipsis denote zero or more wires. Note that the addition in (S1) is standard addition (not modular addition).}\label{figure1}
\end{figure}

We can now define the main set of \ZX-calculus rules employed and analysed in this paper.
Instead of the traditional set of rewrite rules for the stabilizer \ZX-calculus, we use a new, simpler, set of rules first introduced in \cite{BPW16}, which we denote by $\zxs$.

\begin{defi}
 The rule set $\zxs$ consists of the graphical rewrite rules given in Figure \ref{figure1} together with the metarule `only connectivity matters'.
\end{defi}

This set consists of 9 axioms, plus the `only connectivity matters' axiom described in Definition~\ref{dfn:connectivity}. The set of axioms of Figure \ref{figure1} is significantly simpler and more compact than previous versions of the stabilizer \ZX-calculus.

Many versions of the \ZX-calculus restrict the phase angles to lie within the range $(-\pi,\pi]$ or $[0,2\pi)$ and define the addition operation used in the rule (S1) to be addition modulo $2\pi$.
We show that this assumption is not in fact necessary.

For any pair of \ZX-diagrams $D_1,D_2$, and any set of rewrite rules $R$, we denote by $R\vdash D_1=D_2$ the statement that $D_1$ can be transformed into $D_2$ using the rules of $R$.

\begin{thm}\label{thm:2pi=0}
 $\zxs\vdash\; %
\beginpgfgraphicnamed{scalars/2pi-eq-zero}
\InputIfFileExists{scalars/2pi-eq-zero.tikz}{}{\input{./figures/scalars/2pi-eq-zero.tikz}}
\endpgfgraphicnamed$.
\end{thm}
\begin{proof}
 Note first that the derivation of the Hopf law
 \begin{equation}\label{eq:hopf}
\beginpgfgraphicnamed{scalars/hopf}
\InputIfFileExists{scalars/hopf.tikz}{}{\input{./figures/scalars/hopf.tikz}}
\endpgfgraphicnamed
 \end{equation}
 does not involve any phases (see e.g.\ \cite[Lemma~A.3]{BPW16}), so it is unaffected by dropping the restriction of phases to an interval of length $2\pi$.
 The same holds for the proofs of
 \begin{equation}\label{eq:inverse-orig}
\beginpgfgraphicnamed{scalars//inverserule}
\InputIfFileExists{scalars//inverserule.tikz}{}{\input{./figures/scalars//inverserule.tikz}}
\endpgfgraphicnamed
 \end{equation}
 and of
 \begin{equation}\label{eq:Had-selfinv}
\beginpgfgraphicnamed{scalars//H2-small}
\begin{tikzpicture}[scale=0.6]
	\begin{pgfonlayer}{nodelayer}
		\node [style=none] (1) at (0, 0.5) {};
		\node [style=none] (2) at (0, -0.5) {};
		\node [style=H box] (3) at (0, -0.25) {};
		\node [style=H box] (4) at (0, 0.25) {};
	\end{pgfonlayer}
	\begin{pgfonlayer}{edgelayer}
		\draw (2.center) to (1.center);
	\end{pgfonlayer}
\end{tikzpicture}
}
\endpgfgraphicnamed~=~ %
\beginpgfgraphicnamed{scalars//Id-2}
\begin{tikzpicture}[scale=0.6]
	\begin{pgfonlayer}{nodelayer}
		\node [style=none] (1) at (0, 0.5) {};
		\node [style=none] (2) at (0, -0.5) {};
	\end{pgfonlayer}
	\begin{pgfonlayer}{edgelayer}
		\draw (2.center) to (1.center);
	\end{pgfonlayer}
\end{tikzpicture}
}
\endpgfgraphicnamed
 \end{equation}
 see e.g.\ \cite[Lemma~3.2]{BPW16} and \cite[Lemma~A.1]{BPW16}.
 
 In the following, we will abbreviate phases $\pm\frac{\pi}{2}$ to the labels $\pm$.
 The `only connectivity matters' rule will be used implicitly where appropriate.
 We have
 \begin{equation}\label{eq:2pi-step1}
\beginpgfgraphicnamed{scalars/2pi-proof-1}
\InputIfFileExists{scalars/2pi-proof-1.tikz}{}{\input{./figures/scalars/2pi-proof-1.tikz}}
\endpgfgraphicnamed
 \end{equation}
 Furthermore, we can show
 \begin{equation}\label{eq:2pi-step2}
\beginpgfgraphicnamed{scalars/2pi-proof-2}
\InputIfFileExists{scalars/2pi-proof-2.tikz}{}{\input{./figures/scalars/2pi-proof-2.tikz}}
\endpgfgraphicnamed
 \end{equation}
 Combining these two results yields
 \begin{equation}\label{eq:2pi-step3}
\beginpgfgraphicnamed{scalars/2pi-proof-3}
\InputIfFileExists{scalars/2pi-proof-3.tikz}{}{\input{./figures/scalars/2pi-proof-3.tikz}}
\endpgfgraphicnamed
 \end{equation}
 Additionally, we have
 \begin{multline}\label{eq:2pi-step4}
\beginpgfgraphicnamed{scalars/2pi-proof-4}
\InputIfFileExists{scalars/2pi-proof-4.tikz}{}{\input{./figures/scalars/2pi-proof-4.tikz}}
\endpgfgraphicnamed \\
\beginpgfgraphicnamed{scalars/2pi-proof-5}
\InputIfFileExists{scalars/2pi-proof-5.tikz}{}{\input{./figures/scalars/2pi-proof-5.tikz}}
\endpgfgraphicnamed
 \end{multline}
 Thus we can show:
 \[
\beginpgfgraphicnamed{scalars/2pi-proof}
\InputIfFileExists{scalars/2pi-proof.tikz}{}{\input{./figures/scalars/2pi-proof.tikz}}
\endpgfgraphicnamed
 \]
 completing the proof.
\end{proof}

\begin{cor}\label{cor:2kpi=0}
 For any $k\in\mathbb{Z}$,
 \[
  \zxs\vdash \; %
\beginpgfgraphicnamed{scalars-s/spiderg-a-2pi}
\InputIfFileExists{scalars-s/spiderg-a-2pi.tikz}{}{\input{./figures/scalars-s/spiderg-a-2pi.tikz}}
\endpgfgraphicnamed\; = %
\beginpgfgraphicnamed{scalars-s/spiderg-a}
\InputIfFileExists{scalars-s/spiderg-a.tikz}{}{\input{./figures/scalars-s/spiderg-a.tikz}}
\endpgfgraphicnamed
 \]
\end{cor}
\begin{proof}
 The case $k=0$ is trivial.
 We demonstrate the case $k=1$:
 \begin{equation}\label{eq:a-2pi}
\beginpgfgraphicnamed{scalars-s/spiderg-a-2pi-proof}
\InputIfFileExists{scalars-s/spiderg-a-2pi-proof.tikz}{}{\input{./figures/scalars-s/spiderg-a-2pi-proof.tikz}}
\endpgfgraphicnamed
 \end{equation}
 For any positive $k$, repeated application of \eqref{eq:a-2pi} yields the desired result.
 If $k$ is negative, apply \eqref{eq:a-2pi} from right to left instead.
\end{proof}

\begin{thm}
 The simplified rule set is sound and complete, i.e. for any two \ZX-calculus diagrams $D_1$ and $D_2$, we have:
 \[
  \zxs\vdash D_1=D_2 \quad\Longleftrightarrow\quad \intf{D_1}=\intf{D_2}.
 \]
\end{thm}
\begin{proof}
 By Corollary~\ref{cor:2kpi=0}, $\zxs$ is equivalent to the presentation of the \ZX-calculus used in \cite{BPW16}.
 The result thus follows from \cite[Theorems~4.1 \& 4.2]{BPW16}.
\end{proof}

\section{On the necessity of the rewrite rules in the simplified set}
\label{necessity}

The set of rules $\zxs$ is sound and complete. Can it be further simplified? We show in the following series of lemmas, that 8 of the 9 explicit rewrite rules are actually necessary, i.e.~they cannot be derived from the other rules of the language.
Note that (S1) and (H) are actually infinite families of rules; we consider these necessary if at least one of the instantiations is necessary.

Indeed, we begin by considering (S1), a key rule of the \ZX-calculus which acts on the degree of  the dots, and gives to the parameters of the dots their group structure. This rule cannot be derived from the other rules of the \ZX-calculus:

\begin{lem}
The (S1) rule is necessary: $\zxs \setminus \textup{(S1)} \not \vdash  \textup{(S1)}$. 
\end{lem}

\begin{proof}
All rules but (S1) are sound with respect to the following interpretation: for any diagram $D$, let $\denote D^{(S1)}\in \mathbb R$ be inductively defined as $\denote {D_1\otimes D_2}^{(S1)} =\denote {D_1\circ  D_2}^{(S1)} = \denote {D_1}^{(S1)}\denote {D_2}^{(S1)}$,
 \[
  \intf{~%
\beginpgfgraphicnamed{scalars-s/spiderr-a}
\InputIfFileExists{scalars-s/spiderr-a.tikz}{}{\input{./figures/scalars-s/spiderr-a.tikz}}
\endpgfgraphicnamed~}^{(S1)}=\intf{~%
\beginpgfgraphicnamed{scalars-s/spiderg-a}
\InputIfFileExists{scalars-s/spiderg-a.tikz}{}{\input{./figures/scalars-s/spiderg-a.tikz}}
\endpgfgraphicnamed~}^{(S1)}=\frac{\sqrt 5-1}\pi \alpha+1,
 \]
 and $\denote{.}^{(S1)}=1$ for all the other generators.
 
 We now show that all rules except (S1) are sound under this interpretation. Indeed, all the angle-free rules are trivially sound. (ZO$'$) and (H) are also sound since the green and red dots have the same interpretation and the interpretation of $H$ is trivial. Moreover, (EU$'$) is sound, as
 \[
  \intf{%
\beginpgfgraphicnamed{scalars-s//eulernewanglepart}
\InputIfFileExists{scalars-s//eulernewanglepart.tikz}{}{\input{./figures/scalars-s//eulernewanglepart.tikz}}
\endpgfgraphicnamed}^{(S1)} = \left(\frac{\sqrt 5-1}\pi \frac \pi 2+1\right)^2 \left(\frac{\sqrt 5-1}\pi \left(\frac{-\pi} 2\right) + 1\right) = 1 = \intf{~\dHf~}.
 \]
 Notice that the connectivity meta-rule is also sound: the interpretation $\intf{.}^{(S1)}$ depends solely on the phases of dots, so it is an invariant of the isomorphism classes of labelled graphs in Definition~\ref{dfn:labelled_graph}.
 
 However, (S1) is not sound e.g. when $\alpha=\beta=\frac \pi2$:
 \[
  \intf{%
\beginpgfgraphicnamed{scalars//spider-bis-l-pi2}
\InputIfFileExists{scalars//spider-bis-l-pi2.tikz}{}{\input{./figures/scalars//spider-bis-l-pi2.tikz}}
\endpgfgraphicnamed}^{(S1)} = \left(\frac{\sqrt 5-1}\pi \frac \pi 2+1\right)^2 = \frac{\sqrt{5}+3}{2} \neq \sqrt{5} = \frac{\sqrt 5-1}\pi \pi+1 = \intf{%
\beginpgfgraphicnamed{scalars//spider-bis-r-pi}
\InputIfFileExists{scalars//spider-bis-r-pi.tikz}{}{\input{./figures/scalars//spider-bis-r-pi.tikz}}
\endpgfgraphicnamed}^{(S1)}
 \]
 As a consequence (S1) cannot be derived from the other rules. 
\end{proof}

To avoid issues of normalisation, the following two proofs employ interpretations of diagrams as relations instead of linear maps.
Analogous to the representation of linear maps as complex matrices, we represent relations as logical matrices with elements in $\mathbb{B}=\{0,1\}$.
Given a relation $R:A\to B$, the rows of the corresponding logical matrix $M_R$ are indexed by elements of $A$, the columns are indexed by elements of $B$, and the element $(M_R)_{ab}$ is 1 if and only if $(a,b)\in R$.
We denote by $M\otimes N$ the Kronecker product of the logical matrices $M$ and $N$, this corresponds to the Cartesian product of the underlying relations.
The matrix product of two logical matrices $M:\mathbb{B}^{r}\to\mathbb{B}^{s}$ and $N:\mathbb{B}^{s}\to\mathbb{B}^{t}$ is denoted by $N\circ M:\mathbb{B}^{r}\to\mathbb{B}^{t}$, and defined as $(N\circ M)_{jk} = \bigvee_{\ell=1}^{s} (N_{j\ell}\wedge M_{\ell k})$.
Here, $\vee$ denotes logical disjunction and $\wedge$ denotes logical conjunction.
This corresponds to the usual notion of relational composition.

In our interpretation, a diagram $D:n\to m$ will be associated with a logical matrix $\mathbb{B}^{2^n}\to\mathbb{B}^{2^m}$.
There are only two scalars in this model, 0 and 1, so we do not need to worry about normalisation or scaling of diagrams.

For both proofs, the structural maps and diagram compositions will be interpreted as follows:
\begin{align*}
 \intf{D_1\otimes D_2}^{rel} &:= \intf{D_1}^{rel}\otimes\intf{D_2}^{rel} & & &
 \intf{D_1\circ D_2}^{rel} &:= \intf{D_1}^{rel}\circ\intf{D_2}^{rel} \\
 \intf{~\dempty~}^{rel} &:= 1 &
 \intf{~\didf~}^{rel} &:= \begin{pmatrix}1&0\\0&1\end{pmatrix} &
 \intf{\drcupf}^{rel} &:= \begin{pmatrix}1&0&0&1\end{pmatrix} \\
 \intf{~\dsigmaf~}^{rel} &:= \begin{pmatrix}1&0&0&0\\0&0&1&0\\0&1&0&0\\0&0&0&1\end{pmatrix} & & &
 \intf{\drcapf}^{rel} &:= \begin{pmatrix}1\\0\\0\\1\end{pmatrix}
\end{align*}
Note the matrices have the same values as the matrices of the standard interpretation.
The tensor product of logical matrix is the same as the tensor product of standard matrices.
Furthermore, all the matrices above have the property of having at most a single 1 in each row or column.
It is straightforward to check that, for such matrices, the logical matrix product is the same as the standard matrix product.
Thus this choice of interpretation satisfies all the equations of a compact closed category.
In the following two lemmas, we will extend this interpretation to the nodes of \ZX-diagrams in different ways.

The (S3$'$L) rule guarantees that the `green' compact structure coincides with the compact structure of the ambient category, and  (S3$'$R) that the `green' and `red' compact structures coincide. While we do not know whether (S3$'$R) is necessary or not (see Lemma \ref{lemma:B2S3R}), the (S3$'$L) rule cannot be derived from the other rules of the language:

\begin{lem}\label{lem:disconnecting_functor}
The (S3$'$L) rule is necessary: $\zxs \setminus \textup{(S3$'$L)} \not \vdash  \textup{(S3$'$L)}$. 
\end{lem}

\begin{proof}
 To prove the necessity of (S3$'$L), which relates a wire to a spider, we employ an interpretation which completely `disconnects' any map not consisting solely of wires, the relational equivalent of interpreting every node as %
\beginpgfgraphicnamed{scalars-s//disconnect}
\InputIfFileExists{scalars-s//disconnect.tikz}{}{\input{./figures/scalars-s//disconnect.tikz}}
\endpgfgraphicnamed .

 For any diagram $D$, let the relation $\intf{D}^{(S3'L)}$ be the extension of $\intf{.}^{rel}$ which satisfies:
\begin{align*}
 \intf{~\dHf~}^{(S3'L)} &= \begin{pmatrix}1&1\\1&1\end{pmatrix}  &
 \intf{~%
\beginpgfgraphicnamed{scalars-s/spiderg-a}
\InputIfFileExists{scalars-s/spiderg-a.tikz}{}{\input{./figures/scalars-s/spiderg-a.tikz}}
\endpgfgraphicnamed~}^{(S3'L)} =
 \intf{~%
\beginpgfgraphicnamed{scalars-s/spiderr-a}
\InputIfFileExists{scalars-s/spiderr-a.tikz}{}{\input{./figures/scalars-s/spiderr-a.tikz}}
\endpgfgraphicnamed~}^{(S3'L)} &= \begin{pmatrix}1&1&\ldots&1\\ 1&1&\ldots&1\\ \vdots&\vdots&\ddots&\vdots\\ 1&1&\ldots&1 \end{pmatrix}.
\end{align*}
 This interpretation achieves the desired disconnection by having no correlation between the inputs and outputs of any spider, or of the Hadamard node, as opposed to the perfect correlation between the endpoints of a wire.

 Logical matrices of all-ones (trivially) satisfy the spider rule.
 Furthermore, all scalar diagrams are interpreted as 1; for example, $\intf{\gdzz}=\intf{\rdzz}=1$.
 Thus, for each of (S1), (S3$'$R) (B1), (B2$'$), (EU$'$), (H), (IV$'$), and (ZO$'$), both sides of the rule are interpreted as a logical matrix of all-ones, meaning these rules are all sound.
 For example, consider the rule (B2$'$).
 The LHS becomes
 \begin{align*}
  \intf{%
\beginpgfgraphicnamed{scalars//b2snew-l}
\InputIfFileExists{scalars//b2snew-l.tikz}{}{\input{./figures/scalars//b2snew-l.tikz}}
\endpgfgraphicnamed}^{(S3'L)} &= \intf{\gdot}^{(S3'L)} \otimes  \left( \left[ \intf{%
\beginpgfgraphicnamed{scalars-s//RX21zero}
\InputIfFileExists{scalars-s//RX21zero.tikz}{}{\input{./figures/scalars-s//RX21zero.tikz}}
\endpgfgraphicnamed}^{(S3'L)} \otimes \intf{%
\beginpgfgraphicnamed{scalars-s//RX21zero}
\InputIfFileExists{scalars-s//RX21zero.tikz}{}{\input{./figures/scalars-s//RX21zero.tikz}}
\endpgfgraphicnamed}^{(S3'L)} \right] \circ \right. \\
  &\qquad \left. \left[ \intf{~\didf~}^{(S3'L)} \otimes \intf{~\dsigmaf~}^{(S3'L)} \otimes \intf{~\didf~}^{(S3'L)} \right] \circ \left[ \intf{%
\beginpgfgraphicnamed{scalars-s//RZ12zero}
\InputIfFileExists{scalars-s//RZ12zero.tikz}{}{\input{./figures/scalars-s//RZ12zero.tikz}}
\endpgfgraphicnamed}^{(S3'L)} \otimes \intf{%
\beginpgfgraphicnamed{scalars-s//RZ12zero}
\InputIfFileExists{scalars-s//RZ12zero.tikz}{}{\input{./figures/scalars-s//RZ12zero.tikz}}
\endpgfgraphicnamed}^{(S3'L)} \right] \right) \\
  &= 1 \otimes \left( \left[ \begin{pmatrix}1&1&1&1\\1&1&1&1\end{pmatrix} \otimes \begin{pmatrix}1&1&1&1\\1&1&1&1\end{pmatrix} \right] \circ \right. \\
  &\qquad \left. \left[ \begin{pmatrix}1&0\\0&1\end{pmatrix} \otimes \begin{pmatrix}1&0&0&0\\0&0&1&0\\0&1&0&0\\0&0&0&1\end{pmatrix} \otimes \begin{pmatrix}1&0\\0&1\end{pmatrix} \right] \circ \left[ \begin{pmatrix} 1&1\\ 1&1\\ 1&1\\ 1&1 \end{pmatrix} \otimes \begin{pmatrix} 1&1\\ 1&1\\ 1&1\\ 1&1 \end{pmatrix} \right] \right)
  \intertext{Now, the middle brackets contain a permutation matrix while the other two pairs of brackets contain matrices of all-ones. Using the multiplication rule for logical matrices, the result is a matrix of all-ones}
  &= \begin{pmatrix}1&1&1&1\\1&1&1&1\\1&1&1&1\\1&1&1&1\end{pmatrix}.
  \intertext{Similarly, for the RHS we find}
  \intf{%
\beginpgfgraphicnamed{scalars//b2snew-r}
\InputIfFileExists{scalars//b2snew-r.tikz}{}{\input{./figures/scalars//b2snew-r.tikz}}
\endpgfgraphicnamed}^{(S3'L)} &= \left[ \intf{%
\beginpgfgraphicnamed{scalars-s//RZ10z}
\begin{tikzpicture}[yscale=-1]
	\begin{pgfonlayer}{nodelayer}
		\node [style=gn] (0) at (0.5, 0.1) {};
		\node [style=none] (1) at (0.5, -0.2) {};
	\end{pgfonlayer}
	\begin{pgfonlayer}{edgelayer}
		\draw (1.center) to (0.center);
	\end{pgfonlayer}

\end{tikzpicture}
}
\endpgfgraphicnamed}^{(S3'L)} \circ \intf{%
\beginpgfgraphicnamed{scalars-s//RX01z}
\begin{tikzpicture}
	\begin{pgfonlayer}{nodelayer}
		\node [style=rn] (0) at (0.5, 0.1) {};
		\node [style=none] (1) at (0.5, -0.2) {};
	\end{pgfonlayer}
	\begin{pgfonlayer}{edgelayer}
		\draw (1.center) to (0.center);
	\end{pgfonlayer}

\end{tikzpicture}
}
\endpgfgraphicnamed}^{(S3'L)} \right] \otimes \left[ \intf{%
\beginpgfgraphicnamed{scalars-s//RZ12zero}
\InputIfFileExists{scalars-s//RZ12zero.tikz}{}{\input{./figures/scalars-s//RZ12zero.tikz}}
\endpgfgraphicnamed}^{(S3'L)} \circ \intf{%
\beginpgfgraphicnamed{scalars-s//RX21zero}
\InputIfFileExists{scalars-s//RX21zero.tikz}{}{\input{./figures/scalars-s//RX21zero.tikz}}
\endpgfgraphicnamed}^{(S3'L)} \right] \\
 &= \left[ \begin{pmatrix}1&1\end{pmatrix} \circ \begin{pmatrix}1\\1\end{pmatrix} \right] \otimes \left[ \begin{pmatrix} 1&1\\ 1&1\\ 1&1\\ 1&1 \end{pmatrix} \circ \begin{pmatrix}1&1&1&1\\1&1&1&1\end{pmatrix} \right]
 = \begin{pmatrix}1&1&1&1\\1&1&1&1\\1&1&1&1\\1&1&1&1\end{pmatrix},
 \end{align*}
 so (B2$'$) is sound.
 Yet for (S3$'$L) we have
 \[
  \intf{~%
\beginpgfgraphicnamed{scalars//cap}
\InputIfFileExists{scalars//cap.tikz}{}{\input{./figures/scalars//cap.tikz}}
\endpgfgraphicnamed~}^{(S3'L)} = \begin{pmatrix}1&0&0&1\end{pmatrix}^T \neq \begin{pmatrix}1&1&1&1\end{pmatrix}^T = \intf{~%
\beginpgfgraphicnamed{scalars//green_cap}
\InputIfFileExists{scalars//green_cap.tikz}{}{\input{./figures/scalars//green_cap.tikz}}
\endpgfgraphicnamed~}^{(S3'L)}.
 \]
 Thus, (S3$'$L) cannot be derived from the other rules.
\end{proof}

The copy rule (B1), which states that `green copies red', is one of the fundamental rules of the \ZX-calculus. This rule is necessary:

\begin{lem}
 The copy rule (B1) is necessary:  $\zxs \setminus \textup{(B1)} \not \vdash  \textup{(B1)}$. 
\end{lem}
\begin{proof}
 The copy rule (B1) is the only rule which maps a diagram in which there exists a path between any pair of external legs to a diagram in which there does not exist such a path.
 To prove its necessity, we thus use an interpretation which emphasises connectivity, the relational equivalent of interpreting every node as a green spider
 \ctikzfig{scalars-s//spiderg}

 For any diagram $D$, let the relation $\intf{D}^{(B1)}$ be defined by the extension of $\intf{.}^{rel}$ which satisfies:
\begin{align*}
 \intf{~\dHf~}^{(B1)} &= \begin{pmatrix}1&0\\0&1\end{pmatrix} &
 \intf{~%
\beginpgfgraphicnamed{scalars-s/spiderg-a}
\InputIfFileExists{scalars-s/spiderg-a.tikz}{}{\input{./figures/scalars-s/spiderg-a.tikz}}
\endpgfgraphicnamed~}^{(B1)} =
 \intf{~%
\beginpgfgraphicnamed{scalars-s/spiderr-a}
\InputIfFileExists{scalars-s/spiderr-a.tikz}{}{\input{./figures/scalars-s/spiderr-a.tikz}}
\endpgfgraphicnamed~}^{(B1)} &= \begin{pmatrix} 1&0&\ldots&0\\ 0&0&\ldots&0\\ \vdots&\vdots&\ddots&\vdots\\ 0&0&\ldots&1 \end{pmatrix},
\end{align*}
 where only the top left and the bottom right element of each matrix are 1.
 Relations of this form satisfy the spider law, so any connected diagram is interpreted as a logical matrix in which exactly the top left and the bottom right element are 1.
 Hence, connectivity corresponds to perfect correlation between all inputs and outputs.

 Again, all scalar diagrams are interpreted as 1 -- for example, $\intf{\gdzz}=\intf{\rdzz}=1$ -- so any scaling of a connected diagram is interpreted as a logical matrix of the above form.
 This means (S1), (S3$'$), (B2$'$), (EU$'$), (H), (IV$'$) and (ZO$'$) are all sound.
 For example, for the rule (B2$'$), the LHS gives
 \begin{align*}
  \intf{%
\beginpgfgraphicnamed{scalars//b2snew-l}
\InputIfFileExists{scalars//b2snew-l.tikz}{}{\input{./figures/scalars//b2snew-l.tikz}}
\endpgfgraphicnamed}^{(S3'L)} &= \intf{\gdot}^{(S3'L)} \otimes  \left( \left[ \intf{%
\beginpgfgraphicnamed{scalars-s//RX21zero}
\InputIfFileExists{scalars-s//RX21zero.tikz}{}{\input{./figures/scalars-s//RX21zero.tikz}}
\endpgfgraphicnamed}^{(S3'L)} \otimes \intf{%
\beginpgfgraphicnamed{scalars-s//RX21zero}
\InputIfFileExists{scalars-s//RX21zero.tikz}{}{\input{./figures/scalars-s//RX21zero.tikz}}
\endpgfgraphicnamed}^{(S3'L)} \right] \circ \right. \\
  &\qquad \left. \left[ \intf{~\didf~}^{(S3'L)} \otimes \intf{~\dsigmaf~}^{(S3'L)} \otimes \intf{~\didf~}^{(S3'L)} \right] \circ \left[ \intf{%
\beginpgfgraphicnamed{scalars-s//RZ12zero}
\InputIfFileExists{scalars-s//RZ12zero.tikz}{}{\input{./figures/scalars-s//RZ12zero.tikz}}
\endpgfgraphicnamed}^{(S3'L)} \otimes \intf{%
\beginpgfgraphicnamed{scalars-s//RZ12zero}
\InputIfFileExists{scalars-s//RZ12zero.tikz}{}{\input{./figures/scalars-s//RZ12zero.tikz}}
\endpgfgraphicnamed}^{(S3'L)} \right] \right) \\
  &= 1 \otimes \left( \left[ \begin{pmatrix}1&0&0&0\\0&0&0&1\end{pmatrix} \otimes \begin{pmatrix}1&0&0&0\\0&0&0&1\end{pmatrix} \right] \circ \right. \\
  &\qquad \left. \left[ \begin{pmatrix}1&0\\0&1\end{pmatrix} \otimes \begin{pmatrix}1&0&0&0\\0&0&1&0\\0&1&0&0\\0&0&0&1\end{pmatrix} \otimes \begin{pmatrix}1&0\\0&1\end{pmatrix} \right] \circ \left[ \begin{pmatrix} 1&0\\ 0&0\\ 0&0\\ 0&1 \end{pmatrix} \otimes \begin{pmatrix} 1&0\\ 0&0\\ 0&0\\ 0&1 \end{pmatrix} \right] \right).
\end{align*}
To avoid having to work out matrix products with 16 rows or columns, consider the following argument. Recall that the tensor product of logical matrix is the same as the tensor product of standard matrices. Furthermore, it is straightforward to check that for binary-valued matrices with no more than a single 1 in each row and column, the logical matrix product is the same as the standard matrix product. Thus, the above matrix expression is equal to the \emph{standard interpretation} of a diagram consisting of green spiders, identities, and a swap. Using the standard interpretation means we can apply the rules of $\zxs$ to simplify the diagram before evaluating the interpretation:
\begin{align*}
 \intf{%
\beginpgfgraphicnamed{scalars//b2snew-l}
\InputIfFileExists{scalars//b2snew-l.tikz}{}{\input{./figures/scalars//b2snew-l.tikz}}
\endpgfgraphicnamed}^{(S3'L)} &= \intf{%
\beginpgfgraphicnamed{scalars//b2gn-l}
\InputIfFileExists{scalars//b2gn-l.tikz}{}{\input{./figures/scalars//b2gn-l.tikz}}
\endpgfgraphicnamed} = \intf{%
\beginpgfgraphicnamed{scalars//RZ22z}
\InputIfFileExists{scalars//RZ22z.tikz}{}{\input{./figures/scalars//RZ22z.tikz}}
\endpgfgraphicnamed} = \begin{pmatrix}1&0&0&0\\0&0&0&0\\0&0&0&0\\0&0&0&1\end{pmatrix}.
  \intertext{For the RHS we find}
  \intf{%
\beginpgfgraphicnamed{scalars//b2snew-r}
\InputIfFileExists{scalars//b2snew-r.tikz}{}{\input{./figures/scalars//b2snew-r.tikz}}
\endpgfgraphicnamed}^{(S3'L)} &= \left[ \intf{%
\beginpgfgraphicnamed{scalars-s//RZ10z}
}
\endpgfgraphicnamed}^{(S3'L)} \circ \intf{%
\beginpgfgraphicnamed{scalars-s//RX01z}
}
\endpgfgraphicnamed}^{(S3'L)} \right] \otimes \left[ \intf{%
\beginpgfgraphicnamed{scalars-s//RZ12zero}
\InputIfFileExists{scalars-s//RZ12zero.tikz}{}{\input{./figures/scalars-s//RZ12zero.tikz}}
\endpgfgraphicnamed}^{(S3'L)} \circ \intf{%
\beginpgfgraphicnamed{scalars-s//RX21zero}
\InputIfFileExists{scalars-s//RX21zero.tikz}{}{\input{./figures/scalars-s//RX21zero.tikz}}
\endpgfgraphicnamed}^{(S3'L)} \right] \\
 &= \left[ \begin{pmatrix}1&1\end{pmatrix} \circ \begin{pmatrix}1\\1\end{pmatrix} \right] \otimes \left[ \begin{pmatrix} 1&0\\ 0&0\\ 0&0\\ 0&1 \end{pmatrix} \circ \begin{pmatrix}1&0&0&0\\0&0&0&1\end{pmatrix} \right]
 = \begin{pmatrix}1&0&0&0\\0&0&0&0\\0&0&0&0\\0&0&0&1\end{pmatrix},
 \end{align*}
 hence (B2$'$) is sound.
 Now,
 \[
  \intf{~%
\beginpgfgraphicnamed{scalars//b1s-l}
\InputIfFileExists{scalars//b1s-l.tikz}{}{\input{./figures/scalars//b1s-l.tikz}}
\endpgfgraphicnamed~}^{(B1)} = \left[ \begin{pmatrix}1&1\end{pmatrix} \circ \begin{pmatrix}1\\1\end{pmatrix} \right] \otimes \left[ \begin{pmatrix} 1&0\\ 0&0\\ 0&0\\ 0&1 \end{pmatrix} \circ \begin{pmatrix}1\\1\end{pmatrix}\right] = 1\otimes  \begin{pmatrix}1\\0\\0\\1\end{pmatrix} = \begin{pmatrix}1\\0\\0\\1\end{pmatrix}
 \]
 since it is a scaling of a connected diagram, but
 \[
  \intf{~%
\beginpgfgraphicnamed{scalars//b1s-r}
\begin{tikzpicture}
	\begin{pgfonlayer}{nodelayer}
		\node [style=none] (1) at (2.25, -0.25) {};
		\node [style=rn] (3) at (2.25, 0.25) {};
		\node [style=rn] (6) at (2.75, 0.25) {};
		\node [style=none] (7) at (2.75, -0.25) {};
	\end{pgfonlayer}
	\begin{pgfonlayer}{edgelayer}
		\draw [style=none] (3) to (1.center);
		\draw [style=none] (6) to (7.center);
	\end{pgfonlayer}
\end{tikzpicture}
}
\endpgfgraphicnamed~}^{(B1)} = \begin{pmatrix}1\\1\end{pmatrix} \otimes \begin{pmatrix}1\\1\end{pmatrix} = \begin{pmatrix}1\\1\\1\\1\end{pmatrix},
 \]
 so (B1) is not sound.
 Thus, (B1) cannot be derived from the other rules.
\end{proof}

The (EU$'$) rule, which can be interpreted as the Euler decomposition of H, was not present in the seminal paper by Coecke and Duncan \cite{coecke_interacting_2011}. It has been proved later on that the Euler decomposition of H cannot be derived from the original rules of the \ZX-calculus, this rule was then added to the theory \cite{DP09}. The (EU$'$) rule is also necessary in the simplified stabilizer \ZX-calculus:

\begin{lem}
 The  (EU$'$) rule is necessary:  $\zxs \setminus \textup{(EU$'$)} \not \vdash  \textup{(EU$'$)}$. 
\end{lem}

\begin{proof}
The original proof in \cite{DP09,duncan_pivoting_2014} that the Euler decomposition is necessary does not directly apply here since the set of rules is different, and actually the Euler rule is also different. Our proof is however similar,  consisting in `doubling' the diagram in such a way that each dot is encoded using  two dots -- one of each colour -- 
and each H is encoded as a swap.

For any diagram $D:n\to m$, let $\intf{D}^{(EU')}\in \mathbb C^{2^{2n}\times 2^{2m}}$ be inductively defined as $\intf{D_1\otimes D_2}^{(EU')} = \intf{D_1}^{(EU')}{\otimes} \intf{D_2}^{(EU')}$, $\intf{D_2\circ D_1}^{(EU')} = \intf{D_2}^{(EU')}{\circ}\intf{D_1}^{(EU')}$,
\begin{align*}
 \intf{~\dempty~}^{(EU')} &=1 &
 \intf{~\didf~}^{(EU')}  &=\intf{~\did~~\did~} &
 \intf{~\dHf~}^{(EU')} &= \intf{~\dsigmaf~} \\
 \intf{~\dsigmaf~}^{(EU')} &= \intf{\dsigmaf\!\!\!\!\!\!\!\dsigmaf} &
 \intf{\drcupf}^{(EU')} &=  \intf{\drcup\!\!\!\!\!\!\!\!\!\!\!\!\!\!\!\!\!\drcup} &
 \intf{\drcapf}^{(EU')} &=\intf{\drcapf\!\!\!\!\!\!\!\!\!\!\!\!\!\!\!\!\!\drcapf} \\
 \intf{~%
\beginpgfgraphicnamed{scalars-s/spiderg-a}
\InputIfFileExists{scalars-s/spiderg-a.tikz}{}{\input{./figures/scalars-s/spiderg-a.tikz}}
\endpgfgraphicnamed~}^{(EU')} &= \intf{%
\beginpgfgraphicnamed{scalars-s/spiderinterpretationsc-2}
\InputIfFileExists{scalars-s/spiderinterpretationsc-2.tikz}{}{\input{./figures/scalars-s/spiderinterpretationsc-2.tikz}}
\endpgfgraphicnamed} & & &
 \intf{~%
\beginpgfgraphicnamed{scalars-s/spiderr-a}
\InputIfFileExists{scalars-s/spiderr-a.tikz}{}{\input{./figures/scalars-s/spiderr-a.tikz}}
\endpgfgraphicnamed~}^{(EU')} &= \intf{%
\beginpgfgraphicnamed{scalars-s/rspiderinterpretationsc-2}
\InputIfFileExists{scalars-s/rspiderinterpretationsc-2.tikz}{}{\input{./figures/scalars-s/rspiderinterpretationsc-2.tikz}}
\endpgfgraphicnamed}
\end{align*}
All the H-free rules are sound with respect to $\intf{.}^{(EU')}$, as this interpretation essentially corresponds to doubling H-free diagrams.
The (H) rule is also satisfied as exchanging the two copies of the encoding dots  corresponds to exchanging the colour of the encoded dot. The connectivity meta rule is sound as the interpretation maps \ZX-calculus diagrams to \ZX-calculus diagrams, which again satisfy the connectivity meta rule.
 
 On the other hand, the (EU$'$) is not sound with respect to the $\intf{.}^{(EU')}$ interpretation:
 \[
  \intf{%
\beginpgfgraphicnamed{scalars-s//eulernewanglepart}
\InputIfFileExists{scalars-s//eulernewanglepart.tikz}{}{\input{./figures/scalars-s//eulernewanglepart.tikz}}
\endpgfgraphicnamed}^{(EU')} = \intf{%
\beginpgfgraphicnamed{scalars-s//eulernewanglepartdoubled}
\InputIfFileExists{scalars-s//eulernewanglepartdoubled.tikz}{}{\input{./figures/scalars-s//eulernewanglepartdoubled.tikz}}
\endpgfgraphicnamed} = \intf{%
\beginpgfgraphicnamed{scalars-s//eulernewanglepartdoubled2}
\InputIfFileExists{scalars-s//eulernewanglepartdoubled2.tikz}{}{\input{./figures/scalars-s//eulernewanglepartdoubled2.tikz}}
\endpgfgraphicnamed} = \intf{~\dHf~~\dHf~} \neq \intf{~\dsigmaf~} = \intf{~\dHf~}^{(EU')}
 \]
 As a consequence (EU$'$) cannot be derived from the other rules of the language.
\end{proof}

(H) is another fundamental rule of the \ZX-calculus which states that H can be used to change the colour of dot. This rule is also necessary: 

\begin{lem}
 The  (H) rule is necessary:  $\zxs \setminus \textup{(H)} \not \vdash  \textup{(H)}$. 
\end{lem}

\begin{proof}
All rules but (H) are sound for the following interpretation: for any diagram $D$, let $\denote D^{(H)}\in \mathbb R$ be inductively defined as $\denote {D_1\otimes D_2}^{(H)} =\denote {D_1\circ  D_2}^{(H)} = \denote {D_1}^{(H)}\denote {D_2}^{(H)}$,
\[
 \intf{~%
\beginpgfgraphicnamed{scalars-s/spiderr-a}
\InputIfFileExists{scalars-s/spiderr-a.tikz}{}{\input{./figures/scalars-s/spiderr-a.tikz}}
\endpgfgraphicnamed~}^{(H)}=1-\alpha,
\]
and $\denote{.}^{(H)}=1$ for all the other generators. All the rules which do not involve a red dot with a non-zero angle are trivially sound with respect to this interpretation. The connectivity meta rule is sound since the interpretation depends only on the phase labels, which must be consistent across diagrams that are related by graph isomorphism.
 It only remains (H) which is not sound e.g. when $\alpha=\frac \pi 2$.
 As a consequence (H) cannot be derived from the other rules.
\end{proof}

The (IV$'$) rule, like all the rules dealing with scalars (i.e.\ subdiagrams with no inputs and no outputs), has been introduced more recently \cite{backens_making_2015}.

\begin{lem}
 The  (IV$'$) rule is necessary:  $\zxs \setminus \textup{(IV$'$)} \not \vdash  \textup{(IV$'$)}$. 
\end{lem}

\begin{proof}
Following \cite{BPW16} (Section 3.3), one can notice that (IV$'$) is the only rule which equates an empty diagram and a non empty diagram and  thus (IV$'$) cannot be derived from the other rules. More formally, for any diagram $D$, define the invariant $\intf{D}^{(IV')}\in \{0,1\}$, as follows: let $G_D$ be the labelled graph corresponding to $D$ according to Definition~\ref{dfn:labelled_graph} and let
\[
 \intf{D}^{(IV')} = \begin{cases} 1 &\text{if $G_D$ is the empty graph} \\ 0 & \text{otherwise}. \end{cases}
\]
This is an invariant of labelled graph isomorphisms, so if two diagrams have the same connectivity, they have the same value of $\intf{.}^{(IV')}$.
All the explicit rewrite rules except (IV$'$) preserve this invariant since they map non-empty diagrams to non-empty diagrams.
Therefore, (IV$'$) cannot be derived from the other rules.
\end{proof}

The absorbing behaviour of the scalar zero, represented by the diagram \gpi,  is captured by the (ZO$'$) rule.

\begin{lem}\label{lemma:B2S3R}
 The  (ZO$'$) rule is necessary:  $\zxs \setminus \textup{(ZO$'$)} \not \vdash  \textup{(ZO$'$)}$. 
\end{lem}

\begin{proof}
Let $\denote D^{(ZO')}\in \{0,1\}$ be the parity of the number of $H$ plus the number of odd-degree red dots.
More formally, if $D$ is a \ZX-diagram with corresponding labelled graph $G_D = (V,E,\ell)$, then let
\[
 \intf{D}^{(ZO')} = \left( \big|\{v\in V|\ell(v)=H\}\big| + \sum_{\{v\in V|\ell(v)=R_X(\alpha)\}} \deg(v) \right) \bmod 2,
\]
where the first sum is over all vertices labelled $H$ and the second sum is over all vertices whose label is of the form $R_X(\alpha)$ for some angle $\alpha$.
This property is invariant under labelled graph isomorphisms, so if two diagrams have the same connectivity, they have the same value of $\intf{.}^{(ZO')}$.

Furthermore, all the explicit rewrite rules except (ZO$'$) are sound under this interpretation: For (S1) and (S3$'$L) this is trivial, since they involve neither red dots nor $H$.
For (S3$'$R), (B1), (B2$'$) and (IV$'$), the total degree of red dots has the same parity on both sides.
For (EU$'$) and (H), the total degree of red dots plus the number of $H$ boxes has the same parity on both sides, e.g.\ for (EU$'$):
\[
 \intf{%
\beginpgfgraphicnamed{scalars-s//eulernewanglepart}
\InputIfFileExists{scalars-s//eulernewanglepart.tikz}{}{\input{./figures/scalars-s//eulernewanglepart.tikz}}
\endpgfgraphicnamed}^{(ZO')} = 3\bmod 2 = 1 = \intf{~\dHf~}^{(ZO')}.
\]
Yet for (ZO$'$), we have
\[
 \intf{%
\beginpgfgraphicnamed{scalars//zo1-prime-l}
\begin{tikzpicture}
	\begin{pgfonlayer}{nodelayer}
		\node [style=gn] (0) at (-1, 0) {$~\pi~$};
		\node [style=none] (1) at (-0.5, -0.25) {};
		\node [style=gn] (2) at (-0.5, 0.25) {};
	\end{pgfonlayer}
	\begin{pgfonlayer}{edgelayer}
		\draw (1.center) to (2);
	\end{pgfonlayer}
\end{tikzpicture}}
\endpgfgraphicnamed}^{(ZO')} = 0 \neq 1 = \intf{%
\beginpgfgraphicnamed{scalars//zo1-prime-r}
\begin{tikzpicture}
	\begin{pgfonlayer}{nodelayer}
		\node [style=rn] (0) at (1, 0.25) {};
		\node [style=none] (1) at (1, -0.25) {};
		\node [style=gn] (2) at (0.5, 0) {$~\pi~$};
	\end{pgfonlayer}
	\begin{pgfonlayer}{edgelayer}
		\draw (0) to (1.center);
	\end{pgfonlayer}
\end{tikzpicture}}
\endpgfgraphicnamed}^{(ZO')}.
\]
As a consequence (ZO$'$) cannot be derived from the other rules. 
\end{proof}

Finally, among the 9 rules of the simplified \ZX-calculus, the necessity of two rules -- (B2$'$) and (S3$'$R) -- remains unknown. We can however prove that at least one of the two is necessary: 

\begin{lem}
Either  (B2$'$) or (S3$'$R) is necessary:
\[
 \zxs \setminus \{\textup{(B2$'$), (S3$'$R)\}} \not \vdash  \textup{(B2$'$)} \qquad\text{and}\qquad
 \zxs \setminus \{\textup{(B2$'$), (S3$'$R)\}} \not \vdash  \textup{(S3$'$R)}.
\]
\end{lem}

\begin{proof}
All rules except (B2$'$) and (S3$'$R) are sound with respect to  the following interpretation: for any diagram $D:n\to m$, let $\denote D^{\natural}\in \mathbb C^{2^n\times 2^m}$ be inductively defined as $\denote {D_1\otimes D_2}^\natural =   \denote {D_1}^\natural\otimes \denote {D_2}^\natural$,  $\denote {D_1\circ D_2}^\natural = \denote {D_1}^\natural\circ \denote {D_2}^\natural$,
\[
 \intf{%
\beginpgfgraphicnamed{scalars//spideralpha-s}
\InputIfFileExists{scalars//spideralpha-s.tikz}{}{\input{./figures/scalars//spideralpha-s.tikz}}
\endpgfgraphicnamed}^\natural = e^{\frac 43i\alpha}\intf{%
\beginpgfgraphicnamed{scalars//spideralpha-s}
\InputIfFileExists{scalars//spideralpha-s.tikz}{}{\input{./figures/scalars//spideralpha-s.tikz}}
\endpgfgraphicnamed}, \qquad
 \intf{%
\beginpgfgraphicnamed{scalars//spiderredalpha-s}
\InputIfFileExists{scalars//spiderredalpha-s.tikz}{}{\input{./figures/scalars//spiderredalpha-s.tikz}}
\endpgfgraphicnamed}^\natural = e^{\frac 43i\alpha+(n+m)i\frac \pi 3}\intf{%
\beginpgfgraphicnamed{scalars//spiderredalpha-s}
\InputIfFileExists{scalars//spiderredalpha-s.tikz}{}{\input{./figures/scalars//spiderredalpha-s.tikz}}
\endpgfgraphicnamed},
\]
$\intf{~\dHf~}^\natural  =  e^{-i\frac\pi 3}  \intf{~\dHf~}$, and $\denote{.}^\natural=\denote{.}$ for all the other generators.

The connectivity meta rule is sound under this interpretation as it only multiplies the original interpretation by some scalar that depends on the phase labels.
Most of the explicit rewrite rules are also sound.
For (S1), (S3$'$L) and (B1), we have:
\[
 \intf{%
\beginpgfgraphicnamed{scalars//spider-bis-l}
\InputIfFileExists{scalars//spider-bis-l.tikz}{}{\input{./figures/scalars//spider-bis-l.tikz}}
\endpgfgraphicnamed}^\natural
 = e^{\frac 43i\alpha} e^{\frac 43i\beta} \intf{%
\beginpgfgraphicnamed{scalars//spider-bis-l}
\InputIfFileExists{scalars//spider-bis-l.tikz}{}{\input{./figures/scalars//spider-bis-l.tikz}}
\endpgfgraphicnamed}
 = e^{\frac 43i(\alpha+\beta)} \intf{%
\beginpgfgraphicnamed{scalars//spider-bis-r}
\InputIfFileExists{scalars//spider-bis-r.tikz}{}{\input{./figures/scalars//spider-bis-r.tikz}}
\endpgfgraphicnamed}
 = \intf{%
\beginpgfgraphicnamed{scalars//spider-bis-r}
\InputIfFileExists{scalars//spider-bis-r.tikz}{}{\input{./figures/scalars//spider-bis-r.tikz}}
\endpgfgraphicnamed}^\natural
\]
\[
 \intf{\drcapf}^\natural = \intf{\drcapf} = \intf{%
\beginpgfgraphicnamed{scalars//green_cap}
\InputIfFileExists{scalars//green_cap.tikz}{}{\input{./figures/scalars//green_cap.tikz}}
\endpgfgraphicnamed}  = \intf{%
\beginpgfgraphicnamed{scalars//green_cap}
\InputIfFileExists{scalars//green_cap.tikz}{}{\input{./figures/scalars//green_cap.tikz}}
\endpgfgraphicnamed}^\natural
\]
\[
 \intf{%
\beginpgfgraphicnamed{scalars//b1s-l}
\InputIfFileExists{scalars//b1s-l.tikz}{}{\input{./figures/scalars//b1s-l.tikz}}
\endpgfgraphicnamed}^\natural = e^{2i\frac{\pi}{3}} \intf{%
\beginpgfgraphicnamed{scalars//b1s-l}
\InputIfFileExists{scalars//b1s-l.tikz}{}{\input{./figures/scalars//b1s-l.tikz}}
\endpgfgraphicnamed} = e^{2i\frac{\pi}{3}} \intf{%
\beginpgfgraphicnamed{scalars//b1s-r}
}
\endpgfgraphicnamed} = \intf{%
\beginpgfgraphicnamed{scalars//b1s-r}
}
\endpgfgraphicnamed}^\natural
\]
The case of (EU$'$) also works out since $-i\frac\pi 3$ differs from $\frac 43 i\frac{\pi}2 + \frac 43 i\frac{\pi}2 + \frac 43 i\frac{-\pi}2 + 3i\frac{\pi}{3} = \frac 53 i\pi$ by exactly $2\pi$:
\[
 \intf{~\dHf~}^\natural = e^{-i\frac\pi 3}  \intf{~\dHf~} = e^{-i\frac\pi 3}  \intf{%
\beginpgfgraphicnamed{scalars-s//eulernewanglepart}
\InputIfFileExists{scalars-s//eulernewanglepart.tikz}{}{\input{./figures/scalars-s//eulernewanglepart.tikz}}
\endpgfgraphicnamed} = e^{\frac 43 i\frac{\pi}2} e^{\frac 43 i\frac{\pi}2} e^{\frac 43 i\frac{-\pi}2} e^{3i\frac \pi 3} \intf{%
\beginpgfgraphicnamed{scalars-s//eulernewanglepart}
\InputIfFileExists{scalars-s//eulernewanglepart.tikz}{}{\input{./figures/scalars-s//eulernewanglepart.tikz}}
\endpgfgraphicnamed}  = \intf{%
\beginpgfgraphicnamed{scalars-s//eulernewanglepart}
\InputIfFileExists{scalars-s//eulernewanglepart.tikz}{}{\input{./figures/scalars-s//eulernewanglepart.tikz}}
\endpgfgraphicnamed}^\natural
\]
For (H), the scalars resulting from the $H$ boxes exactly balance out the scalar factor resulting from the arity of the red dot.
The rule (IV$'$) becomes:
\[
 \intf{%
\beginpgfgraphicnamed{scalars//dotinverse-l}
\InputIfFileExists{scalars//dotinverse-l.tikz}{}{\input{./figures/scalars//dotinverse-l.tikz}}
\endpgfgraphicnamed}^\natural = e^{3i\frac \pi 3}e^{3i\frac \pi 3} \intf{%
\beginpgfgraphicnamed{scalars//dotinverse-l}
\InputIfFileExists{scalars//dotinverse-l.tikz}{}{\input{./figures/scalars//dotinverse-l.tikz}}
\endpgfgraphicnamed} = \intf{%
\beginpgfgraphicnamed{scalars//dotinverse-l}
\InputIfFileExists{scalars//dotinverse-l.tikz}{}{\input{./figures/scalars//dotinverse-l.tikz}}
\endpgfgraphicnamed} = \intf{%
\beginpgfgraphicnamed{scalars-s//emptysquare-small}
\InputIfFileExists{scalars-s//emptysquare-small.tikz}{}{\input{./figures/scalars-s//emptysquare-small.tikz}}
\endpgfgraphicnamed} = \intf{%
\beginpgfgraphicnamed{scalars-s//emptysquare-small}
\InputIfFileExists{scalars-s//emptysquare-small.tikz}{}{\input{./figures/scalars-s//emptysquare-small.tikz}}
\endpgfgraphicnamed}^\natural
\]
In (ZO$'$), both sides of the equality are interpreted as a zero map, so scalar multiplication has no effect and the rule remains sound.

Yet (S3$'$R) and (B2$'$) are not sound under $\intf{.}^\natural$:
\[
 \intf{%
\beginpgfgraphicnamed{scalars//green_cap}
\InputIfFileExists{scalars//green_cap.tikz}{}{\input{./figures/scalars//green_cap.tikz}}
\endpgfgraphicnamed}^\natural = \intf{%
\beginpgfgraphicnamed{scalars//green_cap}
\InputIfFileExists{scalars//green_cap.tikz}{}{\input{./figures/scalars//green_cap.tikz}}
\endpgfgraphicnamed} \neq e^{2i\frac\pi 3} \intf{%
\beginpgfgraphicnamed{scalars//red_cap}
\InputIfFileExists{scalars//red_cap.tikz}{}{\input{./figures/scalars//red_cap.tikz}}
\endpgfgraphicnamed} = \intf{%
\beginpgfgraphicnamed{scalars//red_cap}
\InputIfFileExists{scalars//red_cap.tikz}{}{\input{./figures/scalars//red_cap.tikz}}
\endpgfgraphicnamed}^\natural
\]
\[
 \intf{%
\beginpgfgraphicnamed{scalars//b2s-l}
\InputIfFileExists{scalars//b2s-l.tikz}{}{\input{./figures/scalars//b2s-l.tikz}}
\endpgfgraphicnamed}^\natural = e^{i\frac\pi 3} e^{3i\frac \pi 3}e^{3i\frac \pi 3}  \intf{%
\beginpgfgraphicnamed{scalars//b2s-l}
\InputIfFileExists{scalars//b2s-l.tikz}{}{\input{./figures/scalars//b2s-l.tikz}}
\endpgfgraphicnamed} = e^{i\frac\pi 3} \intf{%
\beginpgfgraphicnamed{scalars//b2s-l}
\InputIfFileExists{scalars//b2s-l.tikz}{}{\input{./figures/scalars//b2s-l.tikz}}
\endpgfgraphicnamed} \neq e^{3i\frac\pi 3} \intf{%
\beginpgfgraphicnamed{scalars//b2s-r}
\InputIfFileExists{scalars//b2s-r.tikz}{}{\input{./figures/scalars//b2s-r.tikz}}
\endpgfgraphicnamed} = \intf{%
\beginpgfgraphicnamed{scalars//b2s-r}
\InputIfFileExists{scalars//b2s-r.tikz}{}{\input{./figures/scalars//b2s-r.tikz}}
\endpgfgraphicnamed}^\natural
\]
Thus, (S3$'$R) and (B2$'$) cannot be derived from $\zxs\setminus\{$(S3$'$R), (B2$'$)$\}$.
\end{proof}

The two parts of (S3$'$) are very similar, so it is understandable that it would be difficult to determine whether they are independent of each other. 
It is more vexing not to be able to prove whether the bialgebra rule (B2$'$) is necessary. Indeed the bialgebra rule (B2$'$) plays a central role in the language: it is the cornerstone of the axiomatisation of complementary bases.  Thus, it would be unexpected for the bialgebra rule to be derivable from the other rules. In fact, the rewrite rules can be modified to make (B2$'$) the only rule that is not sound under $\intf{\cdot}^{\flat}$, as detailed below in Remark~\ref{rem:bialgebra_necessity}. Yet this comes at the cost of introducing additional scalars in several rules, which adds gratuitous complexity and also invalidates the necessity proof for (S3$'$L). 

While the bialgebra rule (B2$'$) is at the heart of the characterisation of complementary bases, the interpretation of the (S3$'$R) rule is that the two bases -- one characterised by the green dots, the other by the red dots -- are inducing the same compact structure. Indeed, each colour is inducing a compact structure, i.e.\ a pair of a `cup' and a `cap' that satisfy a `snake equation' like in Figure \ref{fig:compact_structure}. There is no a priori reason that those two compact structures should coincide.
Thus, deciding whether (S3$'$R) is necessary is related to the question of deciding whether the other rules of the language force the compact structures induced by the green and the red dots, respectively, to coincide.

\begin{rem}\label{rem:bialgebra_necessity}
The bialgebra rule (B2$'$) can be made necessary while retaining soundness and completeness by modifying two of the other rewrite rules as follows.

Replace (S3$'$) by (S3) and the following rule:
\begin{equation}\label{eq:S3tilde}
\beginpgfgraphicnamed{scalars//S3_tilde}
\InputIfFileExists{scalars//S3_tilde.tikz}{}{\input{./figures/scalars//S3_tilde.tikz}}
\endpgfgraphicnamed
\end{equation}
Additionally, replace (IV$'$) by:
\begin{equation}
\beginpgfgraphicnamed{scalars//IV_tilde2}
\InputIfFileExists{scalars//IV_tilde2.tikz}{}{\input{./figures/scalars//IV_tilde2.tikz}}
\endpgfgraphicnamed
\end{equation}
where the right-hand side denotes an empty diagram.

In the resulting rule set, (B2$'$) is the only rule that is not sound under the interpretation functor $\intf{\cdot}^{\flat}$ which acts like the usual interpretation functor on green dots, wires, and the empty diagram, but adds complex phases to red dots (depending on their degree) and to Hadamard nodes:
 \[
  \intf{%
\beginpgfgraphicnamed{scalars//spiderredalpha}
\InputIfFileExists{scalars//spiderredalpha.tikz}{}{\input{./figures/scalars//spiderredalpha.tikz}}
\endpgfgraphicnamed}^{\flat} = i^{m+n} \intf{%
\beginpgfgraphicnamed{scalars//spiderredalpha}
\InputIfFileExists{scalars//spiderredalpha.tikz}{}{\input{./figures/scalars//spiderredalpha.tikz}}
\endpgfgraphicnamed} \qquad\text{and}\qquad
  \intf{%
\beginpgfgraphicnamed{scalars//Hadamard}
\begin{tikzpicture}
	\begin{pgfonlayer}{nodelayer}
		\node [style={H box}] (0) at (0, 0) {};
		\node [style=none] (1) at (0, 0.5) {};
		\node [style=none] (2) at (0, -0.5) {};
	\end{pgfonlayer}
	\begin{pgfonlayer}{edgelayer}
		\draw (2.center) to (1.center);
	\end{pgfonlayer}
\end{tikzpicture}
}
\endpgfgraphicnamed}^{\flat} = -i \intf{%
\beginpgfgraphicnamed{scalars//Hadamard}
}
\endpgfgraphicnamed}.
 \]

The rule replacing (IV$'$) is necessary by the same argument as (IV$'$) itself. Additionally, by the argument in Lemma \ref{lem:disconnecting_functor}, at least one of (S3) and \eqref{eq:S3tilde} is necessary, but it is unclear whether they both are.
\end{rem}

\section{Simplifying the ambient category}
\label{s:simplified_category}

As shown in \cite{BPW16}, $\zxs$ is complete without the need of assuming that the colour-swap and upside-down rules are also satisfied.  However, the meta-rule `only the connectivity matters' was supposed to hold.
We now consider how to replace this powerful meta rule with weaker assumptions based on the graphical axioms for specific categories. Indeed the `only connectivity matters rule' is actually a combination of axioms making the ambient category compact closed (which implies that isomorphic diagrams are equal), together with some extra properties of the generators (e.g. commutativity or partial transpose). We show in the following that these extra properties of the generators, can essentially be derived from the properties of the ambient category (section \ref{s:compact_closed}), even when the ambient category is braided (section \ref{s:braided_category}).

\subsection{Compact closed category / Isomorphism}
\label{s:compact_closed}

The standard route \cite{coecke_interacting_2011} for axiomatising graphical properties like `only the connectivity matters' in a categorical framework is based on compact closed categories \cite{selinger_dagger_2007,selinger_survey_2010}.  Assuming that we work with a compact closed category  means assuming that the equations in Figure \ref{fig:compact_structure} are satisfied. It additionally implies that arbitrary maps can slide freely along either wire in a crossing. Graphically, this means that any two isomorphic diagrams are equal. It is straightforward to check that all of the above rules are sound for the \ZX-calculus with the `only connectivity matters rule'.

\begin{figure}
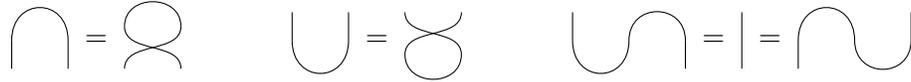

 \centering
\beginpgfgraphicnamed{scalars//compactstructure_cap}
\InputIfFileExists{scalars//compactstructure_cap.tikz}{}{\input{./figures/scalars//compactstructure_cap.tikz}}
\endpgfgraphicnamed \qquad\quad %
\beginpgfgraphicnamed{scalars//compactstructure_cup}
\InputIfFileExists{scalars//compactstructure_cup.tikz}{}{\input{./figures/scalars//compactstructure_cup.tikz}}
\endpgfgraphicnamed \qquad\quad %
\beginpgfgraphicnamed{scalars//compactstructure_snake}
\InputIfFileExists{scalars//compactstructure_snake.tikz}{}{\input{./figures/scalars//compactstructure_snake.tikz}}
\endpgfgraphicnamed
 \caption{Some of the equations satisfied by the structural maps in a compact closed category: the caps and cups are symmetrical and they satisfy the \emph{snake equations}.}\label{fig:compact_structure}
\end{figure}

At first sight, it seems like the compact closed structure is significantly less powerful than the `only connectivity matters rule': in particular, working in a compact closed category does not directly imply any symmetry properties for the nodes, like the ability to swap legs or bend inputs into outputs:
\begin{equation}\label{eq:node_symmetries}
\beginpgfgraphicnamed{scalars-s//commute1}
\InputIfFileExists{scalars-s//commute1.tikz}{}{\input{./figures/scalars-s//commute1.tikz}}
\endpgfgraphicnamed =~%
\beginpgfgraphicnamed{scalars-s//commute2}
\InputIfFileExists{scalars-s//commute2.tikz}{}{\input{./figures/scalars-s//commute2.tikz}}
\endpgfgraphicnamed~\qquad\qquad %
\beginpgfgraphicnamed{scalars-s//bendingnew}
\InputIfFileExists{scalars-s//bendingnew.tikz}{}{\input{./figures/scalars-s//bendingnew.tikz}}
\endpgfgraphicnamed ~=%
\beginpgfgraphicnamed{scalars-s//nonbending}
\InputIfFileExists{scalars-s//nonbending.tikz}{}{\input{./figures/scalars-s//nonbending.tikz}}
\endpgfgraphicnamed
\end{equation}

Nevertheless, it is possible to derive all of these properties using just one more rewrite rule in addition to the ones given in Figure \ref{figure1}, namely:
\begin{equation}\tag{S2$'$}\label{eq:s2prime}
\beginpgfgraphicnamed{scalars//s2red}
\InputIfFileExists{scalars//s2red.tikz}{}{\input{./figures/scalars//s2red.tikz}}
\endpgfgraphicnamed
\end{equation}

Let $\zxc$ be the calculus obtained by considering the rules of figure \ref{figure1} together with the (S2$'$) rule and the axioms of the ambient compact closed category.

Notice that in $\zxc$, the generators of the language are not supposed to be commutative, as a consequence the ellipsis notations ($\ldots$), like in (S1) and (H) do not involve any crossing of wires. 

We first prove three useful properties derivable in $\zxc$:

\begin{lem}\label{ruels2}
  $ \zxc \vdash~ %
\beginpgfgraphicnamed{scalars//GreenID-2}
\begin{tikzpicture}[scale=0.6]
	\begin{pgfonlayer}{nodelayer}
		\node [style=gn] (0) at (0, 0) {};
		\node [style=none] (1) at (0, 0.5) {};
		\node [style=none] (2) at (0, -0.5) {};
	\end{pgfonlayer}
	\begin{pgfonlayer}{edgelayer}
		\draw (0) to (2.center);
		\draw (0) to (1.center) ;
	\end{pgfonlayer}
\end{tikzpicture}
}
\endpgfgraphicnamed~=~ %
\beginpgfgraphicnamed{scalars//Id-2}
}
\endpgfgraphicnamed$~, ~~~ 
  $ \zxc \vdash~ %
\beginpgfgraphicnamed{scalars//Gcup}
\begin{tikzpicture}[scale=0.6]
	\begin{pgfonlayer}{nodelayer}
		\node [style=none] (6) at (0.5, 0.25) {};
		\node [style=none] (7) at (-0.5, 0.25) {};
		\node [style=gn] (10) at (0, -0.25) {};
	\end{pgfonlayer}
	\begin{pgfonlayer}{edgelayer}
		\draw [in=150, out=-90] (7.center) to (10);
		\draw [in=-90, out=30] (10) to (6.center);
	\end{pgfonlayer}
\end{tikzpicture}
}
\endpgfgraphicnamed~=~ %
\beginpgfgraphicnamed{scalars//cup}
\InputIfFileExists{scalars//cup.tikz}{}{\input{./figures/scalars//cup.tikz}}
\endpgfgraphicnamed$~, ~and~~ 
$\zxc \vdash~ %
\beginpgfgraphicnamed{scalars//H2-small}
}
\endpgfgraphicnamed~=~ %
\beginpgfgraphicnamed{scalars//Id-2}
}
\endpgfgraphicnamed$~.
\end{lem}

\begin{proof}
We have 
\begin{equation}\label{S0}
\beginpgfgraphicnamed{scalars//ruels2prf1}
\begin{tikzpicture}
	\begin{pgfonlayer}{nodelayer}
		\node [style=gn] (0) at (0.5, 0) {};
		\node [style=none] (1) at (0.5, 0.5) {};
		\node [style=none] (2) at (0.5, -0.5) {};
	\end{pgfonlayer}
	\begin{pgfonlayer}{edgelayer}
		\draw (0) to (2.center);
		\draw (0) to (1.center);
	\end{pgfonlayer}
\end{tikzpicture}}
\endpgfgraphicnamed\quad \stackrel{\textup{\tiny CCC}}{=}\quad%
\beginpgfgraphicnamed{scalars//ruels2prf2}
\InputIfFileExists{scalars//ruels2prf2.tikz}{}{\input{./figures/scalars//ruels2prf2.tikz}}
\endpgfgraphicnamed\quad \stackrel{\textup{\tiny (S3$'$)}}{=}\quad%
\beginpgfgraphicnamed{scalars//ruels2prf3}
\InputIfFileExists{scalars//ruels2prf3.tikz}{}{\input{./figures/scalars//ruels2prf3.tikz}}
\endpgfgraphicnamed\quad \stackrel{\textup{\tiny (S1)}}{=}\quad%
\beginpgfgraphicnamed{scalars//ruels2prf4}
\InputIfFileExists{scalars//ruels2prf4.tikz}{}{\input{./figures/scalars//ruels2prf4.tikz}}
\endpgfgraphicnamed\quad \stackrel{\textup{\tiny (S3$'$)}}{=}\quad%
\beginpgfgraphicnamed{scalars//ruels2prf5}
\InputIfFileExists{scalars//ruels2prf5.tikz}{}{\input{./figures/scalars//ruels2prf5.tikz}}
\endpgfgraphicnamed\quad \stackrel{\textup{\tiny CCC}}{=}\quad%
\beginpgfgraphicnamed{scalars//ruels2prf6}
\begin{tikzpicture}[scale=0.6]
	\begin{pgfonlayer}{nodelayer}
		\node [style=none] (0) at (0, -0.5) {};
		\node [style=none] (1) at (0, 0.5) {};
	\end{pgfonlayer}
	\begin{pgfonlayer}{edgelayer}
		\draw (1.center) to (0.center);
	\end{pgfonlayer}
\end{tikzpicture}
}
\endpgfgraphicnamed
 \end{equation}
 Moreover, 
  \begin{equation*} % \label{GCup} % this label is referenced nowhere but makes the line exceed the page width
\beginpgfgraphicnamed{scalars//Gcup}
}
\endpgfgraphicnamed\quad \stackrel{\textup{\tiny CCC}}{=}\quad%
\beginpgfgraphicnamed{scalars//Gcup-0}
\InputIfFileExists{scalars//Gcup-0.tikz}{}{\input{./figures/scalars//Gcup-0.tikz}}
\endpgfgraphicnamed\quad \stackrel{\textup{\tiny (S1)}}{=}\quad%
\beginpgfgraphicnamed{scalars//Gcup-1}
\InputIfFileExists{scalars//Gcup-1.tikz}{}{\input{./figures/scalars//Gcup-1.tikz}}
\endpgfgraphicnamed\quad \stackrel{\textup{\tiny (S3$'$)}}{=}\quad%
\beginpgfgraphicnamed{scalars//Gcup-2}
\InputIfFileExists{scalars//Gcup-2.tikz}{}{\input{./figures/scalars//Gcup-2.tikz}}
\endpgfgraphicnamed\quad \stackrel{\textup{\tiny (S1)}}{=}\quad%
\beginpgfgraphicnamed{scalars//Gcup-3}
\InputIfFileExists{scalars//Gcup-3.tikz}{}{\input{./figures/scalars//Gcup-3.tikz}}
\endpgfgraphicnamed\quad \stackrel{\textup{\tiny (S1)}}{=}\quad%
\beginpgfgraphicnamed{scalars//Gcup-4}
\InputIfFileExists{scalars//Gcup-4.tikz}{}{\input{./figures/scalars//Gcup-4.tikz}}
\endpgfgraphicnamed\quad \stackrel{\textup{\tiny (\ref{S0})}}{=}\quad%
\beginpgfgraphicnamed{scalars//cup}
\InputIfFileExists{scalars//cup.tikz}{}{\input{./figures/scalars//cup.tikz}}
\endpgfgraphicnamed
 \end{equation*}
  Finally, 
  \begin{equation}\label{H2}
\beginpgfgraphicnamed{scalars//H-2-0}
\begin{tikzpicture}[scale=0.6]
	\begin{pgfonlayer}{nodelayer}
		\node [style=H box] (0) at (0, 0.35) {};
		\node [style=none] (2) at (0, 1) {};
		\node [style=none] (3) at (0, -1) {};
		\node [style=H box] (4) at (0, -0.35) {};
	\end{pgfonlayer}
	\begin{pgfonlayer}{edgelayer}
		\draw (2.center) to (0);
		\draw (0) to (4);
		\draw (4) to (3.center);
	\end{pgfonlayer}
\end{tikzpicture}
}
\endpgfgraphicnamed\quad \stackrel{\textup{\tiny (S2$'$)}}{=}\quad  %
\beginpgfgraphicnamed{scalars//H-2-1}
\InputIfFileExists{scalars//H-2-1.tikz}{}{\input{./figures/scalars//H-2-1.tikz}}
\endpgfgraphicnamed\quad \stackrel{\textup{\tiny (H)}}{=}\quad  %
\beginpgfgraphicnamed{scalars//H-2-2}
\begin{tikzpicture}[scale=0.6]
	\begin{pgfonlayer}{nodelayer}
		\node [style=none] (2) at (0, 1) {};
		\node [style=none] (3) at (0, -1) {};
		\node [style=gn] (4) at (0, 0) {};
	\end{pgfonlayer}
	\begin{pgfonlayer}{edgelayer}
		\draw (4) to (2.center);
		\draw (3.center) to (4);
	\end{pgfonlayer}
\end{tikzpicture}
}
\endpgfgraphicnamed\quad \stackrel{\textup{\tiny (\ref{S0})}}{=}\quad 
\beginpgfgraphicnamed{scalars//H-2-3}
\begin{tikzpicture}[scale=0.6]
	\begin{pgfonlayer}{nodelayer}
		\node [style=none] (2) at (0, 1) {};
		\node [style=none] (3) at (0, -1) {};
	\end{pgfonlayer}
	\begin{pgfonlayer}{edgelayer}
		\draw (2.center) to (3.center);
	\end{pgfonlayer}
\end{tikzpicture}
}
\endpgfgraphicnamed
  \end{equation}

\end{proof}

A first particular instance of the `only connectivity matters' meta rule is that H is self transpose, which can be derived in $\zxc$:

\begin{lem}\label{ruelht}
$\zxc \vdash~ %
\beginpgfgraphicnamed{scalars//H-trans-s}
\InputIfFileExists{scalars//H-trans-s.tikz}{}{\input{./figures/scalars//H-trans-s.tikz}}
\endpgfgraphicnamed~=~ %
\beginpgfgraphicnamed{scalars//H-trans-s-l}
\InputIfFileExists{scalars//H-trans-s-l.tikz}{}{\input{./figures/scalars//H-trans-s-l.tikz}}
\endpgfgraphicnamed$\quad \textup{(HT)}
\end{lem} 

\begin{proof}
 \begin{equation*} % \label{Htrans} % this label is referenced nowhere but makes the line exceed the page width
\beginpgfgraphicnamed{scalars//H-transp-1}
\begin{tikzpicture}[scale=0.6]
	\begin{pgfonlayer}{nodelayer}
		\node [style=none] (2) at (0, 1) {};
		\node [style=none] (3) at (0, -1) {};
		\node [style=H box] (4) at (0, 0) {};
	\end{pgfonlayer}
	\begin{pgfonlayer}{edgelayer}
		\draw (4) to (3.center);
		\draw (4) to (2.center);
	\end{pgfonlayer}
\end{tikzpicture}
}
\endpgfgraphicnamed\quad \stackrel{\textup{\tiny CCC}}{=}\quad  %
\beginpgfgraphicnamed{scalars//H-transp-2}
\InputIfFileExists{scalars//H-transp-2.tikz}{}{\input{./figures/scalars//H-transp-2.tikz}}
\endpgfgraphicnamed\quad \stackrel{\textup{\tiny (\ref{H2})}}{=}\quad  %
\beginpgfgraphicnamed{scalars//H-transp-3}
\InputIfFileExists{scalars//H-transp-3.tikz}{}{\input{./figures/scalars//H-transp-3.tikz}}
\endpgfgraphicnamed\quad \stackrel{\textup{\tiny (S3$'$)}}{=}\quad %
\beginpgfgraphicnamed{scalars//H-transp-4}
\InputIfFileExists{scalars//H-transp-4.tikz}{}{\input{./figures/scalars//H-transp-4.tikz}}
\endpgfgraphicnamed \quad \stackrel{\textup{\tiny (H)}}{=}\quad %
\beginpgfgraphicnamed{scalars//H-transp-5}
\InputIfFileExists{scalars//H-transp-5.tikz}{}{\input{./figures/scalars//H-transp-5.tikz}}
\endpgfgraphicnamed \quad \stackrel{\textup{\tiny (S3$'$)}}{=}\quad %
\beginpgfgraphicnamed{scalars//H-transp-6}
\InputIfFileExists{scalars//H-transp-6.tikz}{}{\input{./figures/scalars//H-transp-6.tikz}}
\endpgfgraphicnamed
   \end{equation*}
\end{proof}

Another instance of the `only connectivity matters' meta rule is the partial transpose of the green dot, which can also be derived in $\zxc$:
\begin{samepage}
\begin{lem}\label{lem:PT} ~
  \begin{center}
$\zxc \vdash~%
\beginpgfgraphicnamed{scalars-s//GRPT-L1}
\InputIfFileExists{scalars-s//GRPT-L1.tikz}{}{\input{./figures/scalars-s//GRPT-L1.tikz}}
\endpgfgraphicnamed ~~\stackrel{\textup{\tiny LPT}}{=}~~%
\beginpgfgraphicnamed{scalars-s//GRPT-C1}
\InputIfFileExists{scalars-s//GRPT-C1.tikz}{}{\input{./figures/scalars-s//GRPT-C1.tikz}}
\endpgfgraphicnamed~~\stackrel{\textup{\tiny RPT}}{=}~~ %
\beginpgfgraphicnamed{scalars-s//GRPT-R1}
\InputIfFileExists{scalars-s//GRPT-R1.tikz}{}{\input{./figures/scalars-s//GRPT-R1.tikz}}
\endpgfgraphicnamed$ and $\zxc \vdash~%
\beginpgfgraphicnamed{scalars-s//GRPT-L2}
\InputIfFileExists{scalars-s//GRPT-L2.tikz}{}{\input{./figures/scalars-s//GRPT-L2.tikz}}
\endpgfgraphicnamed ~~\stackrel{\textup{\tiny LPT}}{=}~~%
\beginpgfgraphicnamed{scalars-s//GRPT-C2}
\InputIfFileExists{scalars-s//GRPT-C2.tikz}{}{\input{./figures/scalars-s//GRPT-C2.tikz}}
\endpgfgraphicnamed~~\stackrel{\textup{\tiny RPT}}{=}~~ %
\beginpgfgraphicnamed{scalars-s//GRPT-R2}
\InputIfFileExists{scalars-s//GRPT-R2.tikz}{}{\input{./figures/scalars-s//GRPT-R2.tikz}}
\endpgfgraphicnamed \quad\textup{(PT)}$
  \end{center}
\end{lem} 
\end{samepage}

\begin{proof}Left partial transposes can be derived from (S3$'$) and (S1):
\[%
\beginpgfgraphicnamed{scalars-s//GLPT-1}
\InputIfFileExists{scalars-s//GLPT-1.tikz}{}{\input{./figures/scalars-s//GLPT-1.tikz}}
\endpgfgraphicnamed\quad \stackrel{\textup{\tiny (S3$'$)}}{=}\quad%
\beginpgfgraphicnamed{scalars-s//GLPT-2}
\InputIfFileExists{scalars-s//GLPT-2.tikz}{}{\input{./figures/scalars-s//GLPT-2.tikz}}
\endpgfgraphicnamed\quad \stackrel{\textup{\tiny (S1)}}{=}\quad%
\beginpgfgraphicnamed{scalars-s//GLPT-3}
\InputIfFileExists{scalars-s//GLPT-3.tikz}{}{\input{./figures/scalars-s//GLPT-3.tikz}}
\endpgfgraphicnamed\]

\noindent The right partial transpose can be derived as follows:
\[%
\beginpgfgraphicnamed{scalars-s//GRPT-1}
\InputIfFileExists{scalars-s//GRPT-1.tikz}{}{\input{./figures/scalars-s//GRPT-1.tikz}}
\endpgfgraphicnamed\quad \stackrel{\textup{\tiny (S1)}}{=}\quad%
\beginpgfgraphicnamed{scalars-s//GRPT-2b}
\InputIfFileExists{scalars-s//GRPT-2b.tikz}{}{\input{./figures/scalars-s//GRPT-2b.tikz}}
\endpgfgraphicnamed\quad \stackrel{\textup{\tiny (\ref{S0})}}{=}\quad%
\beginpgfgraphicnamed{scalars-s//GRPT-3b}
\InputIfFileExists{scalars-s//GRPT-3b.tikz}{}{\input{./figures/scalars-s//GRPT-3b.tikz}}
\endpgfgraphicnamed\quad \stackrel{\textup{\tiny CCC}}{=}\quad%
\beginpgfgraphicnamed{scalars-s//GRPT-8}
\InputIfFileExists{scalars-s//GRPT-8.tikz}{}{\input{./figures/scalars-s//GRPT-8.tikz}}
\endpgfgraphicnamed
\]

\noindent The derivation of the up-side-down versions of the partial transposes are similar. 
\end{proof}

A direct corollary of Lemma \ref{lem:PT} is the following alternative form of the spider rule:

\begin{cor}\label{cor:spider}
 $\zxc \vdash ~%
\beginpgfgraphicnamed{scalars//spider-ter-R}
\InputIfFileExists{scalars//spider-ter-R.tikz}{}{\input{./figures/scalars//spider-ter-R.tikz}}
\endpgfgraphicnamed~=~%
\beginpgfgraphicnamed{scalars//spider-ter-C}
\InputIfFileExists{scalars//spider-ter-C.tikz}{}{\input{./figures/scalars//spider-ter-C.tikz}}
\endpgfgraphicnamed
 $
\end{cor}

\begin{proof}
  \begin{align*}
\beginpgfgraphicnamed{scalars//spider-ter-R}
\InputIfFileExists{scalars//spider-ter-R.tikz}{}{\input{./figures/scalars//spider-ter-R.tikz}}
\endpgfgraphicnamed~&\stackrel{\textup{\tiny (S1)}}{=}~ %
\beginpgfgraphicnamed{scalars//spider-T-1}
\InputIfFileExists{scalars//spider-T-1.tikz}{}{\input{./figures/scalars//spider-T-1.tikz}}
\endpgfgraphicnamed~\stackrel{\textup{\tiny (S1)}}{=}~ %
\beginpgfgraphicnamed{scalars//spider-T-2}
\InputIfFileExists{scalars//spider-T-2.tikz}{}{\input{./figures/scalars//spider-T-2.tikz}}
\endpgfgraphicnamed~
    \\&
    \stackrel{\textup{\tiny (S3$'$), (\ref{S0})}}{=}~ %
\beginpgfgraphicnamed{scalars//spider-T-3}
\InputIfFileExists{scalars//spider-T-3.tikz}{}{\input{./figures/scalars//spider-T-3.tikz}}
\endpgfgraphicnamed~\stackrel{\textup{\tiny CCC}}{=}~%
\beginpgfgraphicnamed{scalars//spider-ter-L}
\InputIfFileExists{scalars//spider-ter-L.tikz}{}{\input{./figures/scalars//spider-ter-L.tikz}}
\endpgfgraphicnamed~\stackrel{\textup{\tiny (S1)}}{=}~%
\beginpgfgraphicnamed{scalars//spider-ter-C}
\InputIfFileExists{scalars//spider-ter-C.tikz}{}{\input{./figures/scalars//spider-ter-C.tikz}}
\endpgfgraphicnamed
  \end{align*}
\end{proof}

The most interesting instance of the `only connectivity matters' meta rule is the commutativity of the green dot, which  derivation in $\zxc$ is more involved:

\begin{lem}\label{hopfandcomm}
$\zxc \vdash~%
\beginpgfgraphicnamed{scalars-s//com-0}
\InputIfFileExists{scalars-s//com-0.tikz}{}{\input{./figures/scalars-s//com-0.tikz}}
\endpgfgraphicnamed =~%
\beginpgfgraphicnamed{scalars-s//com-R}
\InputIfFileExists{scalars-s//com-R.tikz}{}{\input{./figures/scalars-s//com-R.tikz}}
\endpgfgraphicnamed\quad \textup{(C)}$
\end{lem}

\begin{proof}
The derivation of (C) is based on the Hopf law $%
\beginpgfgraphicnamed{scalars//Hopf-L}
\InputIfFileExists{scalars//Hopf-L.tikz}{}{\input{./figures/scalars//Hopf-L.tikz}}
\endpgfgraphicnamed=%
\beginpgfgraphicnamed{scalars//Hopf-R}
\begin{tikzpicture}[scale=0.6]
	\begin{pgfonlayer}{nodelayer}
		\node [style=none] (0) at (0, -0.75) {};
		\node [style=none] (1) at (0, 0.75) {};
		\node [style=rn] (2) at (0, -0.25) {};
		\node [style=gn] (3) at (0, 0.25) {};
	\end{pgfonlayer}
	\begin{pgfonlayer}{edgelayer}
		\draw (0.center) to (2);
		\draw (3) to (1.center);
	\end{pgfonlayer}
\end{tikzpicture}}
\endpgfgraphicnamed$~. Using  $%
\beginpgfgraphicnamed{scalars//H2-small}
}
\endpgfgraphicnamed= %
\beginpgfgraphicnamed{scalars//Id-2}
}
\endpgfgraphicnamed$ (Lemma  \ref{ruels2}) and (H),  the Hopf law is equivalent to $%
\beginpgfgraphicnamed{scalars//HopfH-L}
\InputIfFileExists{scalars//HopfH-L.tikz}{}{\input{./figures/scalars//HopfH-L.tikz}}
\endpgfgraphicnamed=%
\beginpgfgraphicnamed{scalars//HopfH-R}
\begin{tikzpicture}[scale=0.6]
	\begin{pgfonlayer}{nodelayer}
		\node [style=none] (3) at (0, -0.75) {};
		\node [style=none] (13) at (0, 0.75) {};
		\node [style=gn] (14) at (0, -0.25) {};
		\node [style=gn] (18) at (0, 0.25) {};
	\end{pgfonlayer}
	\begin{pgfonlayer}{edgelayer}
		\draw (3.center) to (14.center);
		\draw (18) to (13.center);
	\end{pgfonlayer}
\end{tikzpicture}
}
\endpgfgraphicnamed$~ which can be derived in $\zxc$ as follows:
 
\begin{multline*}
\beginpgfgraphicnamed{scalars//HopfH-0}
\InputIfFileExists{scalars//HopfH-0.tikz}{}{\input{./figures/scalars//HopfH-0.tikz}}
\endpgfgraphicnamed ~\stackrel{\textup{\tiny \ref{ruels2}}}{=}~  
\beginpgfgraphicnamed{scalars//HopfH-0b}
\InputIfFileExists{scalars//HopfH-0b.tikz}{}{\input{./figures/scalars//HopfH-0b.tikz}}
\endpgfgraphicnamed ~\stackrel{\textup{\tiny (H)}}{=}~  
\beginpgfgraphicnamed{scalars//HopfH-1}
\InputIfFileExists{scalars//HopfH-1.tikz}{}{\input{./figures/scalars//HopfH-1.tikz}}
\endpgfgraphicnamed ~\stackrel{\textup{\tiny CCC}}{=}~  
\beginpgfgraphicnamed{scalars//HopfH-2}
\InputIfFileExists{scalars//HopfH-2.tikz}{}{\input{./figures/scalars//HopfH-2.tikz}}
\endpgfgraphicnamed ~\stackrel{\textup{\tiny *}}{=}~ 
\beginpgfgraphicnamed{scalars//HopfH-3}
\InputIfFileExists{scalars//HopfH-3.tikz}{}{\input{./figures/scalars//HopfH-3.tikz}}
\endpgfgraphicnamed ~\stackrel{\textup{\tiny \ref{ruels2}}}{=}~
\beginpgfgraphicnamed{scalars//HopfH-4b}
\InputIfFileExists{scalars//HopfH-4b.tikz}{}{\input{./figures/scalars//HopfH-4b.tikz}}
\endpgfgraphicnamed ~\stackrel{\textup{\tiny (S1)}}{=}~
\beginpgfgraphicnamed{scalars//HopfH-4}
\InputIfFileExists{scalars//HopfH-4.tikz}{}{\input{./figures/scalars//HopfH-4.tikz}}
\endpgfgraphicnamed \\ ~\stackrel{\textup{\tiny (B2$'$)}}{=}~ 
\beginpgfgraphicnamed{scalars//HopfH-5}
\InputIfFileExists{scalars//HopfH-5.tikz}{}{\input{./figures/scalars//HopfH-5.tikz}}
\endpgfgraphicnamed  ~\stackrel{\textup{\tiny (PT})}{=}~ 
\beginpgfgraphicnamed{scalars//HopfH-6}
\InputIfFileExists{scalars//HopfH-6.tikz}{}{\input{./figures/scalars//HopfH-6.tikz}}
\endpgfgraphicnamed ~\stackrel{\textup{\tiny (B1)}}{=}~ 
\beginpgfgraphicnamed{scalars//HopfH-7}
\InputIfFileExists{scalars//HopfH-7.tikz}{}{\input{./figures/scalars//HopfH-7.tikz}}
\endpgfgraphicnamed ~\stackrel{\textup{\tiny *}}{=}~
\beginpgfgraphicnamed{scalars//HopfH-8}
\InputIfFileExists{scalars//HopfH-8.tikz}{}{\input{./figures/scalars//HopfH-8.tikz}}
\endpgfgraphicnamed ~\stackrel{\textup{\tiny \ref{cor:spider}}}{=}~
\beginpgfgraphicnamed{scalars//HopfH-8b}
\InputIfFileExists{scalars//HopfH-8b.tikz}{}{\input{./figures/scalars//HopfH-8b.tikz}}
\endpgfgraphicnamed   ~\stackrel{\textup{\tiny \ref{ruels2}}}{=}~ 
\beginpgfgraphicnamed{scalars//HopfH-9}
\InputIfFileExists{scalars//HopfH-9.tikz}{}{\input{./figures/scalars//HopfH-9.tikz}}
\endpgfgraphicnamed  ~\stackrel{\textup{\tiny (H)}}{=}~ 
\beginpgfgraphicnamed{scalars//HopfH-10}
\InputIfFileExists{scalars//HopfH-10.tikz}{}{\input{./figures/scalars//HopfH-10.tikz}}
\endpgfgraphicnamed~\stackrel{\textup{\tiny \ref{cor:spider}}}{=}~ 
\beginpgfgraphicnamed{scalars//HopfH-11}
\begin{tikzpicture}[scale=0.6]
	\begin{pgfonlayer}{nodelayer}
		\node [style=gn] (2) at (0, 0.75) {};
		\node [style=none] (3) at (0, -1.25) {};
		\node [style=none] (13) at (0, 1.25) {};
		\node [style=gn] (14) at (0, -0.75) {};
	\end{pgfonlayer}
	\begin{pgfonlayer}{edgelayer}
		\draw (2.center) to (13.center);
		\draw (3.center) to (14.center);
	\end{pgfonlayer}
\end{tikzpicture}
}
\endpgfgraphicnamed
\end{multline*}

 \noindent where the second and seventh steps (*) are based on $%
\beginpgfgraphicnamed{scalars//RedCap-L}
\InputIfFileExists{scalars//RedCap-L.tikz}{}{\input{./figures/scalars//RedCap-L.tikz}}
\endpgfgraphicnamed=%
\beginpgfgraphicnamed{scalars//RedCap-R}
\begin{tikzpicture}[scale=0.6]
	\begin{pgfonlayer}{nodelayer}
		\node [style=none] (2) at (0.5, -0.325) {};
		\node [style=none] (3) at (-0.5, -0.325) {};
	\end{pgfonlayer}
	\begin{pgfonlayer}{edgelayer}
		\draw [in=90, out=90, looseness=2.00] (3.center) to (2.center);
	\end{pgfonlayer}
\end{tikzpicture}
}
\endpgfgraphicnamed$
 which can be derived as follows:
\begin{equation}\label{eq:red-cap}
\beginpgfgraphicnamed{scalars//RedCap-L}
\InputIfFileExists{scalars//RedCap-L.tikz}{}{\input{./figures/scalars//RedCap-L.tikz}}
\endpgfgraphicnamed ~\stackrel{\textup{\tiny (\ref{H2})}}{=}~ 
\beginpgfgraphicnamed{scalars//RedCap-1}
\InputIfFileExists{scalars//RedCap-1.tikz}{}{\input{./figures/scalars//RedCap-1.tikz}}
\endpgfgraphicnamed ~\stackrel{\textup{\tiny (H)}}{=}~ 
\beginpgfgraphicnamed{scalars//RedCap-2}
\InputIfFileExists{scalars//RedCap-2.tikz}{}{\input{./figures/scalars//RedCap-2.tikz}}
\endpgfgraphicnamed ~\stackrel{\textup{\tiny (S1)}}{=}~ 
\beginpgfgraphicnamed{scalars//RedCap-3}
\InputIfFileExists{scalars//RedCap-3.tikz}{}{\input{./figures/scalars//RedCap-3.tikz}}
\endpgfgraphicnamed ~\stackrel{\textup{\tiny (S3$'$)}}{=}~ 
\beginpgfgraphicnamed{scalars//RedCap-4}
\InputIfFileExists{scalars//RedCap-4.tikz}{}{\input{./figures/scalars//RedCap-4.tikz}}
\endpgfgraphicnamed ~\stackrel{\textup{\tiny (H)}}{=}~ 
\beginpgfgraphicnamed{scalars//RedCap-5}
\begin{tikzpicture}[scale=0.6]
	\begin{pgfonlayer}{nodelayer}
		\node [style=none] (2) at (0.5, -0.5) {};
		\node [style=none] (3) at (-0.5, -0.5) {};
		\node [style=gn] (4) at (0, 0.25) {};
	\end{pgfonlayer}
	\begin{pgfonlayer}{edgelayer}
		\draw [in=90, out=-135] (4) to (3.center);
		\draw [in=90, out=-45] (4) to (2.center);
	\end{pgfonlayer}
\end{tikzpicture}
}
\endpgfgraphicnamed ~\stackrel{\textup{\tiny (S3$'$)}}{=}~ 
\beginpgfgraphicnamed{scalars//RedCap-R}
}
\endpgfgraphicnamed
\end{equation}

\noindent We are now ready to prove the commutativity property:\\
\centerline{$%
\beginpgfgraphicnamed{scalars-s//com-0}
\InputIfFileExists{scalars-s//com-0.tikz}{}{\input{./figures/scalars-s//com-0.tikz}}
\endpgfgraphicnamed~\stackrel{\textup{\tiny \ref{ruels2}}}{=}~ 
\beginpgfgraphicnamed{scalars-s//com-1}
\InputIfFileExists{scalars-s//com-1.tikz}{}{\input{./figures/scalars-s//com-1.tikz}}
\endpgfgraphicnamed~\stackrel{\textup{\tiny \ref{ruels2}}}{=}~ 
\beginpgfgraphicnamed{scalars-s//com-2}
\InputIfFileExists{scalars-s//com-2.tikz}{}{\input{./figures/scalars-s//com-2.tikz}}
\endpgfgraphicnamed~\stackrel{\textup{\tiny (S1),\ref{cor:spider}}}{=}~ 
\beginpgfgraphicnamed{scalars-s//com-3}
\InputIfFileExists{scalars-s//com-3.tikz}{}{\input{./figures/scalars-s//com-3.tikz}}
\endpgfgraphicnamed~\stackrel{\textup{\tiny Hopf}}{=}~ 
\beginpgfgraphicnamed{scalars-s//com-4}
\InputIfFileExists{scalars-s//com-4.tikz}{}{\input{./figures/scalars-s//com-4.tikz}}
\endpgfgraphicnamed~\stackrel{\textup{\tiny (S1),\ref{cor:spider}}}{=}~ 
\beginpgfgraphicnamed{scalars-s//com-5}
\InputIfFileExists{scalars-s//com-5.tikz}{}{\input{./figures/scalars-s//com-5.tikz}}
\endpgfgraphicnamed$}\\\centerline{$~\stackrel{\textup{\tiny (H)}}{=}~ 
\beginpgfgraphicnamed{scalars-s//com-6}
\InputIfFileExists{scalars-s//com-6.tikz}{}{\input{./figures/scalars-s//com-6.tikz}}
\endpgfgraphicnamed~\stackrel{\textup{\tiny (B2)}}{=}~ 
\beginpgfgraphicnamed{scalars-s//com-7}
\InputIfFileExists{scalars-s//com-7.tikz}{}{\input{./figures/scalars-s//com-7.tikz}}
\endpgfgraphicnamed~\stackrel{\textup{\tiny (S1),\ref{cor:spider}}}{=}~ 
\beginpgfgraphicnamed{scalars-s//com-8}
\InputIfFileExists{scalars-s//com-8.tikz}{}{\input{./figures/scalars-s//com-8.tikz}}
\endpgfgraphicnamed~\stackrel{\textup{\tiny Hopf}}{=}~ 
\beginpgfgraphicnamed{scalars-s//com-9}
\InputIfFileExists{scalars-s//com-9.tikz}{}{\input{./figures/scalars-s//com-9.tikz}}
\endpgfgraphicnamed~\stackrel{\textup{\tiny (B1)}}{=}~ 
\beginpgfgraphicnamed{scalars-s//com-10}
\InputIfFileExists{scalars-s//com-10.tikz}{}{\input{./figures/scalars-s//com-10.tikz}}
\endpgfgraphicnamed$}\\\centerline{$~\stackrel{\textup{\tiny (H)}}{=}~ 
\beginpgfgraphicnamed{scalars-s//com-11}
\InputIfFileExists{scalars-s//com-11.tikz}{}{\input{./figures/scalars-s//com-11.tikz}}
\endpgfgraphicnamed~\stackrel{\textup{\tiny \ref{cor:spider}}}{=}~ 
\beginpgfgraphicnamed{scalars-s//com-12}
\InputIfFileExists{scalars-s//com-12.tikz}{}{\input{./figures/scalars-s//com-12.tikz}}
\endpgfgraphicnamed~\stackrel{\textup{\tiny \ref{ruels2}}}{=}~ 
\beginpgfgraphicnamed{scalars-s//com-R}
\InputIfFileExists{scalars-s//com-R.tikz}{}{\input{./figures/scalars-s//com-R.tikz}}
\endpgfgraphicnamed$}

\end{proof}

\begin{thm}\label{thm:zxc-connectivity}
$\zxc$ satisfies the `only connectivity matters' meta rule. As a consequence, $\zxc$ is complete for stabilizer quantum mechanics. 
\end{thm}

\begin{proof}
First notice that the upside-down versions of the equation (C) can  be derived:
\[%
\beginpgfgraphicnamed{scalars-s//com-USD-1}
\InputIfFileExists{scalars-s//com-USD-1.tikz}{}{\input{./figures/scalars-s//com-USD-1.tikz}}
\endpgfgraphicnamed\quad \stackrel{\textup{\tiny (PT)}}{=}\quad%
\beginpgfgraphicnamed{scalars-s//com-USD-2}
\InputIfFileExists{scalars-s//com-USD-2.tikz}{}{\input{./figures/scalars-s//com-USD-2.tikz}}
\endpgfgraphicnamed\quad\stackrel{\textup{\tiny CCC}}{=}\quad%
\beginpgfgraphicnamed{scalars-s//com-USD-3}
\InputIfFileExists{scalars-s//com-USD-3.tikz}{}{\input{./figures/scalars-s//com-USD-3.tikz}}
\endpgfgraphicnamed\quad\stackrel{\textup{\tiny (C)}}{=}\quad%
\beginpgfgraphicnamed{scalars-s//com-USD-4}
\InputIfFileExists{scalars-s//com-USD-4.tikz}{}{\input{./figures/scalars-s//com-USD-4.tikz}}
\endpgfgraphicnamed\quad\stackrel{\textup{\tiny (PT)}}{=}\quad%
\beginpgfgraphicnamed{scalars-s//commute1}
\InputIfFileExists{scalars-s//commute1.tikz}{}{\input{./figures/scalars-s//commute1.tikz}}
\endpgfgraphicnamed\]
Moreover, commutativity can also be derived for spiders of arbitrary degree: 
\[%
\beginpgfgraphicnamed{scalars-s//GCom-1}
\InputIfFileExists{scalars-s//GCom-1.tikz}{}{\input{./figures/scalars-s//GCom-1.tikz}}
\endpgfgraphicnamed\quad \stackrel{\textup{\tiny (S1)}}{=}\quad%
\beginpgfgraphicnamed{scalars-s//GCom-2b}
\InputIfFileExists{scalars-s//GCom-2b.tikz}{}{\input{./figures/scalars-s//GCom-2b.tikz}}
\endpgfgraphicnamed\quad \stackrel{\textup{\tiny (PT)}}{=}\quad%
\beginpgfgraphicnamed{scalars-s//GCom-3}
\InputIfFileExists{scalars-s//GCom-3.tikz}{}{\input{./figures/scalars-s//GCom-3.tikz}}
\endpgfgraphicnamed\quad \stackrel{\textup{\tiny (S1)}}{=}\quad%
\beginpgfgraphicnamed{scalars-s//GCom-4}
\InputIfFileExists{scalars-s//GCom-4.tikz}{}{\input{./figures/scalars-s//GCom-4.tikz}}
\endpgfgraphicnamed\]\[ \stackrel{\textup{\tiny (S1)}}{=}\quad%
\beginpgfgraphicnamed{scalars-s//GCom-5}
\InputIfFileExists{scalars-s//GCom-5.tikz}{}{\input{./figures/scalars-s//GCom-5.tikz}}
\endpgfgraphicnamed
\quad \stackrel{\textup{\tiny (C)}}{=}\quad%
\beginpgfgraphicnamed{scalars-s//GCom-6}
\InputIfFileExists{scalars-s//GCom-6.tikz}{}{\input{./figures/scalars-s//GCom-6.tikz}}
\endpgfgraphicnamed\quad \stackrel{\textup{\tiny (S1)}}{=}\quad%
\beginpgfgraphicnamed{scalars-s//GCom-7}
\InputIfFileExists{scalars-s//GCom-7.tikz}{}{\input{./figures/scalars-s//GCom-7.tikz}}
\endpgfgraphicnamed\quad \stackrel{\textup{\tiny (S1)}}{=}\quad%
\beginpgfgraphicnamed{scalars-s//GCom-8}
\InputIfFileExists{scalars-s//GCom-8.tikz}{}{\input{./figures/scalars-s//GCom-8.tikz}}
\endpgfgraphicnamed\]\[ \stackrel{\textup{\tiny (PT)}}{=}\quad%
\beginpgfgraphicnamed{scalars-s//GCom-9}
\InputIfFileExists{scalars-s//GCom-9.tikz}{}{\input{./figures/scalars-s//GCom-9.tikz}}
\endpgfgraphicnamed\quad \stackrel{\textup{\tiny (S1)}}{=}\quad%
\beginpgfgraphicnamed{scalars-s//GCom-10}
\InputIfFileExists{scalars-s//GCom-10.tikz}{}{\input{./figures/scalars-s//GCom-10.tikz}}
\endpgfgraphicnamed
\]

\noindent Similarly, 
\[%
\beginpgfgraphicnamed{scalars-s//GCom-USD-L}
\InputIfFileExists{scalars-s//GCom-USD-L.tikz}{}{\input{./figures/scalars-s//GCom-USD-L.tikz}}
\endpgfgraphicnamed\quad \stackrel{\textup{\tiny }}{=}\quad%
\beginpgfgraphicnamed{scalars-s//GCom-USD-R}
\InputIfFileExists{scalars-s//GCom-USD-R.tikz}{}{\input{./figures/scalars-s//GCom-USD-R.tikz}}
\endpgfgraphicnamed\]

So far, we have proved all the required properties of the green spiders, which means two green spiders with the same phase are equal if and only if they have the same numbers of inputs and outputs. The same result holds for red spiders, since
the colour swapped versions of the previous equations can be derived thanks to the (H) rule and Equation \ref{H2}.

Now we prove that $\zxc$ satisfies the `only connectivity matters' meta rule in terms of Definition~\ref{dfn:connectivity}. 
An explicit application of the proof is demonstrated in Example~\ref{ex:connectivity} below.

Suppose $D_1$ and $D_2$ are two \ZX-calculus diagrams in $\zxc$. 
Let $G_{D_1}=(V_1,E_1,\ell_1)$ and $G_{D_2}=(V_2,E_2,\ell_2)$ be the corresponding labelled multigraphs. Assume that there exists an isomorphism $h$ from $G_{D_1}$ to $G_{D_2}$ which respects the labelling.
We want to show that $\zxc$ can be used to transform $D_1$ into $D_2$.

To do this, we first transform $D_1$ into another diagram $D'_1$ by locally modifying each node using (HT), (PT), and commutativity of spiders.
In particular, we replace each spider $u$ by a spider $u'$ of the same colour and angle as $u$, such that if the $k$-th input (or output) of $h(u)$  is connected to $h(v)$ in $D_2$, then the $k$-th input (or output) of $u'$ is connected to $v'$ in $D'_1$. 
Similarly, we replace each Hadamard node $w$ by a Hadamard node $w'$ such that if the input of $h(w)$  is connected to $h(v)$ in $D_2$, then the input of $w'$ is connected to $v'$ in $D'_1$, and similarly for the output of the node.
This replacement will generally introduce new cups, caps and swaps in the neighbourhoods of the nodes.
Since $\zxc$ implies (HT), (PT), and commutativity of spiders, we have $\zxc \vdash D_1=D_1'$.

Now, by construction, the diagram $D_1'$ is isomorphic to $D_2$ in the sense of Theorem \ref{thm:compact-closed} \cite[Theorem~14]{selinger_survey_2010}, respecting the order and direction of incidence of wires on nodes. Therefore, since $\zxc$ includes the axioms of a compact closed category, $\zxc \vdash D'_1=D_2$.
By combining this derivation with the previous one, $\zxc \vdash D_1=D_2$.
Thus, $\zxc$ satisfies the `only connectivity matters' meta rule.
\end{proof}
 
\begin{exa}\label{ex:connectivity}
 To illustrate the final part of Theorem~\ref{thm:zxc-connectivity}, let $D_1$ and $D_2$ be the following two diagrams:
\[D_1 ~:= ~%
\beginpgfgraphicnamed{ex_OTM2}
\InputIfFileExists{ex_OTM2.tikz}{}{\input{./figures/ex_OTM2.tikz}}
\endpgfgraphicnamed \qquad\qquad D_2 ~:=~ %
\beginpgfgraphicnamed{ex_OTM1}
\InputIfFileExists{ex_OTM1.tikz}{}{\input{./figures/ex_OTM1.tikz}}
\endpgfgraphicnamed\]

The isomorphism of labelled multigraphs $h$ simply maps the green spider in $D_1$ to the green spider in $D_2$ and the red spider in $D_1$ to the red spider in $D_2$. 
The intermediate diagram $D'_1$ is the following: 

\[D_1'~:=~ %
\beginpgfgraphicnamed{ex_OTM2-primeb}
\InputIfFileExists{ex_OTM2-primeb.tikz}{}{\input{./figures/ex_OTM2-primeb.tikz}}
\endpgfgraphicnamed\]
 
$D_1'$ is obtained from $D_1$ by transforming the generators of $D_1$ locally through applying the rules (HT), (PT), and commutativity of spiders inside the (informal) blue boxes:
\[\zxc \vdash %
\beginpgfgraphicnamed{ex_OTM2-1}
\InputIfFileExists{ex_OTM2-1.tikz}{}{\input{./figures/ex_OTM2-1.tikz}}
\endpgfgraphicnamed ~~=~~ %
\beginpgfgraphicnamed{ex_OTM2-2b}
\InputIfFileExists{ex_OTM2-2b.tikz}{}{\input{./figures/ex_OTM2-2b.tikz}}
\endpgfgraphicnamed\]

Now $D_1'$ and $D_2$ are isomorphic in the sense of Theorem~\ref{thm:compact-closed}, i.e.\ one can move around the generators (depicted with black boxes) to transform $D_1'$ into $D_2$ without changing the inputs and outputs of the spiders inside the boxes:

\[\zxc \vdash%
\beginpgfgraphicnamed{ex_OTM2-4b}
\InputIfFileExists{ex_OTM2-4b.tikz}{}{\input{./figures/ex_OTM2-4b.tikz}}
\endpgfgraphicnamed ~~=~~ %
\beginpgfgraphicnamed{ex_OTM3-1}
\InputIfFileExists{ex_OTM3-1.tikz}{}{\input{./figures/ex_OTM3-1.tikz}}
\endpgfgraphicnamed\]

Thus, the equality between $D_1'$ and $D_2$ follows from the axioms of a compact closed category, completing the example.
\end{exa}

We have derived the `only connectivity matters' meta rule from the ambient compact closed category and the rules of Figure \ref{figure1} together with the additional (S2$'$) rule $%
\beginpgfgraphicnamed{scalars//RedID-2}
\begin{tikzpicture}[scale=0.6]
	\begin{pgfonlayer}{nodelayer}
		\node [style=rn] (0) at (0, 0) {};
		\node [style=none] (1) at (0, 0.5) {};
		\node [style=none] (2) at (0, -0.5) {};
	\end{pgfonlayer}
	\begin{pgfonlayer}{edgelayer}
		\draw (0) to (2.center);
		\draw (0) to (1.center);
	\end{pgfonlayer}
\end{tikzpicture}
}
\endpgfgraphicnamed~=~ %
\beginpgfgraphicnamed{scalars//Id-2}
}
\endpgfgraphicnamed$~. While (S2$'$) can be derived from (S3$'$) in $\zxs$, we conjecture that (S2$'$) is necessary in $\zxc$, i.e. $\zxc \setminus $(S2$'$)$ \not \vdash $(S2$'$), although we do not have a proof of this.

\subsection{Braided autonomous category / 3D isotopy}
\label{s:braided_category}

In this subsection, we take a less standard approach for making the connectivity meta rule rigorous: we work in an ambient category which implies only that diagrams which are 3D-isotopic are equal (whereas a compact closed category implies that all isomorphic diagrams are equal).
We show that, combined with the other rules of the \ZX-calculus, 3D-isotopy is enough to recover the `only the connectivity matters' meta-rule.

3D-isotopy  is a natural equivalence of diagrams which can be axiomatised using the Reidemeister moves \cite{reidemeister_knotentheorie_1932} (see Figure \ref{fig:reidrules}), the snake equations (see Figure \ref{fig:compact_structure}, not including the equations where caps and cups are symmetrical), as well as the property that arbitrary maps can slide freely along either wire in a braiding. In a categorical setting, the Reidemeister move (R2) follows from the invertibility of a braiding, while (R3) follows from the coherence axioms of a braided monoidal category and the naturality of a braiding. Therefore,  3D-isotopy is modelled by a braided autonomous category augmented with the loop axiom (R1) \cite{selinger_survey_2010}, which appears so useful that it is exploited by several graphical languages for quantum information and computation \cite{reutter2019shaded, jaffe2018holographic}.

\begin{figure}[h!]
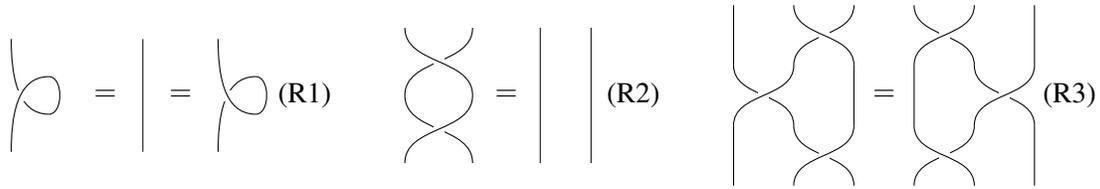

 \centering%
\beginpgfgraphicnamed{scalars//reid1b}
\InputIfFileExists{scalars//reid1b.tikz}{}{\input{./figures/scalars//reid1b.tikz}}
\endpgfgraphicnamed(R1) \qquad %
\beginpgfgraphicnamed{scalars//reid2b}
\InputIfFileExists{scalars//reid2b.tikz}{}{\input{./figures/scalars//reid2b.tikz}}
\endpgfgraphicnamed (R2) \qquad %
\beginpgfgraphicnamed{scalars//reid3c}
\InputIfFileExists{scalars//reid3c.tikz}{}{\input{./figures/scalars//reid3c.tikz}}
\endpgfgraphicnamed(R3)
 \caption{Reidemeister moves}\label{fig:reidrules}
\end{figure}

The following technicality arises: when the ambient category is braided rather than symmetric, one needs to specify which way the wires cross in each crossing. The only crossing occurring in the rules of the language is in the bialgebra rule (B2$'$), which we transform into the following braided rule (the choice of how the wires cross is arbitrary):
\begin{equation}\tag{B2$''$}\label{eq:B2''}
\beginpgfgraphicnamed{scalars//b2snewbraid1b}
\InputIfFileExists{scalars//b2snewbraid1b.tikz}{}{\input{./figures/scalars//b2snewbraid1b.tikz}}
\endpgfgraphicnamed
\end{equation}

Now we define a version of the \ZX-calculus based on a braided autonomous category, called $\zxb$.

\begin{defi}
 The graphical calculus $\zxb$ has the same generators as $\zxs$ (cf.\ the table at the beginning of Section~\ref{s:generators}), except that the swap $\sigma$ is replaced by the two braidings
 \[
  \beta: 2\to 2 :: %
\beginpgfgraphicnamed{scalars//braid1}
\InputIfFileExists{scalars//braid1.tikz}{}{\input{./figures/scalars//braid1.tikz}}
\endpgfgraphicnamed \qquad\text{and}\qquad
  \beta^{-1}: 2\to 2 :: %
\beginpgfgraphicnamed{scalars//braid2}
\InputIfFileExists{scalars//braid2.tikz}{}{\input{./figures/scalars//braid2.tikz}}
\endpgfgraphicnamed
 \]
 The graphical rewrite rules of $\zxb$ are those given in Figure \ref{figure1}, with (B2$'$) replaced by \eqref{eq:B2''}, together with 3D-isotopy and the rule \eqref{eq:s2prime}.
\end{defi}

We also note here that the ellipsis notations ($\ldots$) in the rules of $\zxb$  do not involve any crossing of wires. 

Below we will show that the $\zxb$ is complete for stabilizer quantum mechanics. To achieve this goal we need a series of lemmas. 

First note that in $\zxb$ we still have the equalities derived in Lemmas \ref{ruels2}, \ref{ruelht},  \ref{lem:PT}  and Corollary \ref{cor:spider},  since all the rules applied there  exist in the $\zxb$ as well.  This includes the equality $%
\beginpgfgraphicnamed{scalars//H2-small}
}
\endpgfgraphicnamed= %
\beginpgfgraphicnamed{scalars//Id-2}
}
\endpgfgraphicnamed$ (Lemma  \ref{ruels2}), together with (H), from which it follows that the
 colour swapped versions of all the rules and their derivations in $\zxb$  still hold. In particular, we have red spiders, partial transpose of the red dot, and  a braided version of (B2) and its colour-swapped version:
 \begin{equation}\label{eq:b2braid}
\beginpgfgraphicnamed{scalars//b2braidscirc}
\InputIfFileExists{scalars//b2braidscirc.tikz}{}{\input{./figures/scalars//b2braidscirc.tikz}}
\endpgfgraphicnamed
\end{equation}

We also have the Hopf law.

\begin{lem}\label{hopfbraid} The Hopf law holds in $\zxb$:
 $$%
\beginpgfgraphicnamed{scalars//Hopf-L-flip-flip}
\InputIfFileExists{scalars//Hopf-L-flip-flip.tikz}{}{\input{./figures/scalars//Hopf-L-flip-flip.tikz}}
\endpgfgraphicnamed=%
\beginpgfgraphicnamed{scalars//Hopf-R-flip-flip}
\begin{tikzpicture}[yscale=-1]
	\begin{pgfonlayer}{nodelayer}
		\node [style=gn] (0) at (0, -0.25) {};
		\node [style=none] (1) at (0, -0.5) {};
		\node [style=rn] (2) at (0, 0.25) {};
		\node [style=none] (3) at (0, 0.5) {};
	\end{pgfonlayer}
	\begin{pgfonlayer}{edgelayer}
		\draw (3.center) to (2);
		\draw (0) to (1.center);
	\end{pgfonlayer}
\end{tikzpicture}
}
\endpgfgraphicnamed$$ 
\end{lem}

\begin{proof}
We first prove the upside-down Hopf law $%
\beginpgfgraphicnamed{scalars//Hopf-L-flip}
\InputIfFileExists{scalars//Hopf-L-flip.tikz}{}{\input{./figures/scalars//Hopf-L-flip.tikz}}
\endpgfgraphicnamed=%
\beginpgfgraphicnamed{scalars//Hopf-R-flip}
\begin{tikzpicture}[scale=0.6,yscale=-1]
	\begin{pgfonlayer}{nodelayer}
		\node [style=none] (0) at (0, -0.75) {};
		\node [style=none] (1) at (0, 0.75) {};
		\node [style=rn] (2) at (0, -0.25) {};
		\node [style=gn] (3) at (0, 0.25) {};
	\end{pgfonlayer}
	\begin{pgfonlayer}{edgelayer}
		\draw (0.center) to (2);
		\draw (3) to (1.center);
	\end{pgfonlayer}
\end{tikzpicture}
}
\endpgfgraphicnamed$~. Then the normal Hopf law follows directly from  $%
\beginpgfgraphicnamed{scalars//H2-small}
}
\endpgfgraphicnamed= %
\beginpgfgraphicnamed{scalars//Id-2}
}
\endpgfgraphicnamed$ (Lemma  \ref{ruels2}) and (H).  Indeed,

\noindent  $ 
\beginpgfgraphicnamed{scalars//HopfH-1-v2}
\InputIfFileExists{scalars//HopfH-1-v2.tikz}{}{\input{./figures/scalars//HopfH-1-v2.tikz}}
\endpgfgraphicnamed ~\stackrel{\textup{\tiny (R1)}}{=}~  
\beginpgfgraphicnamed{scalars//HopfH-2-v2}
\InputIfFileExists{scalars//HopfH-2-v2.tikz}{}{\input{./figures/scalars//HopfH-2-v2.tikz}}
\endpgfgraphicnamed ~\stackrel{\textup{\tiny *}}{=}~ 
\beginpgfgraphicnamed{scalars//HopfH-3-v2}
\InputIfFileExists{scalars//HopfH-3-v2.tikz}{}{\input{./figures/scalars//HopfH-3-v2.tikz}}
\endpgfgraphicnamed ~\stackrel{\textup{\tiny \ref{ruels2}}}{=}~
\beginpgfgraphicnamed{scalars//HopfH-4b-v2}
\InputIfFileExists{scalars//HopfH-4b-v2.tikz}{}{\input{./figures/scalars//HopfH-4b-v2.tikz}}
\endpgfgraphicnamed ~\stackrel{\textup{\tiny (S1)}}{=}~
\beginpgfgraphicnamed{scalars//HopfH-4-v2}
\InputIfFileExists{scalars//HopfH-4-v2.tikz}{}{\input{./figures/scalars//HopfH-4-v2.tikz}}
\endpgfgraphicnamed  ~\stackrel{\textup{\tiny (B2$''$)}}{=}~ 
\beginpgfgraphicnamed{scalars//HopfH-5-v2}
\InputIfFileExists{scalars//HopfH-5-v2.tikz}{}{\input{./figures/scalars//HopfH-5-v2.tikz}}
\endpgfgraphicnamed $\\$ ~\stackrel{\textup{\tiny (PT)}}{=}~ 
\beginpgfgraphicnamed{scalars//HopfH-6-v2}
\InputIfFileExists{scalars//HopfH-6-v2.tikz}{}{\input{./figures/scalars//HopfH-6-v2.tikz}}
\endpgfgraphicnamed ~\stackrel{\textup{\tiny (B1)}}{=}~ 
\beginpgfgraphicnamed{scalars//HopfH-7-v2}
\InputIfFileExists{scalars//HopfH-7-v2.tikz}{}{\input{./figures/scalars//HopfH-7-v2.tikz}}
\endpgfgraphicnamed ~\stackrel{\textup{\tiny *}}{=}~
\beginpgfgraphicnamed{scalars//HopfH-8-v2}
\InputIfFileExists{scalars//HopfH-8-v2.tikz}{}{\input{./figures/scalars//HopfH-8-v2.tikz}}
\endpgfgraphicnamed ~\stackrel{\textup{\tiny (PT)}}{=}~
\beginpgfgraphicnamed{scalars//Hopf-R-flip}
}
\endpgfgraphicnamed
 $
 
 \noindent where the second and eighth steps (*) are based on the identity $%
\beginpgfgraphicnamed{scalars//RedCap-L}
\InputIfFileExists{scalars//RedCap-L.tikz}{}{\input{./figures/scalars//RedCap-L.tikz}}
\endpgfgraphicnamed=%
\beginpgfgraphicnamed{scalars//RedCap-R}
}
\endpgfgraphicnamed$, which is proved in \eqref{eq:red-cap}.
 The derivation of this identity goes through the same way in $\zxb$.
 \end{proof}

Now we can prove the braided commutativity of green co-copy:
\begin{lem}\label{greencommute2}
The green co-copy map is braided commutative:
\begin{equation}\label{eq:greencommutebraid}
\beginpgfgraphicnamed{scalars//greencommutebraid}
\InputIfFileExists{scalars//greencommutebraid.tikz}{}{\input{./figures/scalars//greencommutebraid.tikz}}
\endpgfgraphicnamed
 \end{equation}
\end{lem}

\begin{proof}
The obvious rewrite rule for removing a wire crossing is the braided (B2), i.e, (\ref{eq:b2braid}). We rewrite the diagram so that can be applied, using (S2$'$), the spider rules, and the Hopf law (which is used twice, symmetrically). This covers the rewrite steps in the top row. (\ref{eq:b2braid}) is applied over the line break.
\begin{align*}
\beginpgfgraphicnamed{scalars//greencommuteprfbraid_v3}
\InputIfFileExists{scalars//greencommuteprfbraid_v3.tikz}{}{\input{./figures/scalars//greencommuteprfbraid_v3.tikz}}
\endpgfgraphicnamed
\end{align*}
We then use the spider rule, the Hopf law again, the upside-down copy law obtained by partial transpose from (B1), and (S2$'$) to simplify the diagram again, thus completing the proof.
\end{proof}

\begin{lem}\label{greencommuteflip}
The green copy map is braided commutative:
\begin{equation}\label{eq:greencommuteflip}
\beginpgfgraphicnamed{scalars//greencommuteflipbraid}
\InputIfFileExists{scalars//greencommuteflipbraid.tikz}{}{\input{./figures/scalars//greencommuteflipbraid.tikz}}
\endpgfgraphicnamed
 \end{equation}
\end{lem}

\begin{proof}
We have upside-down versions of all the rules used in the proof of that the green co-copy map is braided commutative (Lemma \ref{greencommute2}). That proof can therefore be straightforwardly repeated upside-down.
\end{proof}

The colour-swapped versions of the above Lemmas also hold:

\begin{lem}\label{redcommuteflipswap}
Both the red copy and co-copy maps are braided commutative:
\begin{equation}
\beginpgfgraphicnamed{scalars//redcommutebraid}
\InputIfFileExists{scalars//redcommutebraid.tikz}{}{\input{./figures/scalars//redcommutebraid.tikz}}
\endpgfgraphicnamed\quad\quad\quad\quad%
\beginpgfgraphicnamed{scalars//redcommuteflipbraid}
\InputIfFileExists{scalars//redcommuteflipbraid.tikz}{}{\input{./figures/scalars//redcommuteflipbraid.tikz}}
\endpgfgraphicnamed
\end{equation}
\end{lem}

 \begin{proof}
These follow immediately from applying Hadamard nodes to all inputs and outputs of Lemmas \ref{greencommute2} and \ref{greencommuteflip} by the colour-swapped version of the colour change rule (H) and the naturality of the braiding.
\end{proof}

Once we have the braided commutativity of green co-copy, the inversely braided commutativity can be obtained immediately:

\begin{lem}\label{greeninvcommute}
The green co-copy map is inversely braided commutative:
\begin{equation}
\beginpgfgraphicnamed{scalars//greencommutebraid2}
\InputIfFileExists{scalars//greencommutebraid2.tikz}{}{\input{./figures/scalars//greencommutebraid2.tikz}}
\endpgfgraphicnamed
 \end{equation}
\end{lem}
\begin{proof}
\begin{align*}
\beginpgfgraphicnamed{scalars//greencommutebraid2prf}
\InputIfFileExists{scalars//greencommutebraid2prf.tikz}{}{\input{./figures/scalars//greencommutebraid2prf.tikz}}
\endpgfgraphicnamed
\end{align*}
Here we used the inverse property of the braiding and Lemma \ref{greencommute2}.
\end{proof}
As a consequence, we have inversely  braided commutativity of green copy, red copy and co-copy.
\begin{lem}\label{greenredcommuteflipswap}
The maps of green copy, red copy and co-copy are inversely braided commutative:
\begin{equation}\label{greenredcommuteflipswapbraid}
\beginpgfgraphicnamed{scalars//greencommuteflipbraid2}
\InputIfFileExists{scalars//greencommuteflipbraid2.tikz}{}{\input{./figures/scalars//greencommuteflipbraid2.tikz}}
\endpgfgraphicnamed\quad\quad\quad\quad%
\beginpgfgraphicnamed{scalars//redcommuteflipbraid2}
\InputIfFileExists{scalars//redcommuteflipbraid2.tikz}{}{\input{./figures/scalars//redcommuteflipbraid2.tikz}}
\endpgfgraphicnamed\quad\quad\quad\quad%
\beginpgfgraphicnamed{scalars//redcommutebraid2}
\InputIfFileExists{scalars//redcommutebraid2.tikz}{}{\input{./figures/scalars//redcommutebraid2.tikz}}
\endpgfgraphicnamed
\end{equation}
\end{lem}

With the 3D isotopy of diagrams in a braided autonomous category, we derive the inversely braided version of (B2).
\begin{lem}\label{lm:b2braid2nd}
The braided bialgebra rule holds with the inverse braiding:
\begin{equation}\label{b2braid2nd}
\beginpgfgraphicnamed{scalars//b2braid2ndscirc}
\InputIfFileExists{scalars//b2braid2ndscirc.tikz}{}{\input{./figures/scalars//b2braid2ndscirc.tikz}}
\endpgfgraphicnamed
 \end{equation}
 \end{lem}

 \begin{proof}
\begin{align*}
\beginpgfgraphicnamed{scalars//b2braid2ndprf_v2}
\InputIfFileExists{scalars//b2braid2ndprf_v2.tikz}{}{\input{./figures/scalars//b2braid2ndprf_v2.tikz}}
\endpgfgraphicnamed
\end{align*}
We flip the first diagram but keep the linear order of the edges entering and exiting, with respect to the 3D isotopy. Then we use equations  (\ref{eq:b2braid}), (\ref{eq:greencommuteflip}) and (\ref{greenredcommuteflipswapbraid}).
\end{proof}

\begin{lem}\label{braidissymm}
 The braiding is in fact symmetric.
\end{lem}

\begin{proof}
Begin by rewriting the diagram until the braided bialgebra rule can be applied, using the Hopf law and the spider rules:
\begin{align*}
\beginpgfgraphicnamed{scalars//braidsymmetry}
\InputIfFileExists{scalars//braidsymmetry.tikz}{}{\input{./figures/scalars//braidsymmetry.tikz}}
\endpgfgraphicnamed
\end{align*}
We then apply the inversely braided bialgebra rule and reverse the initial rewrite steps.
\end{proof}

\begin{thm}
 When working in a braided autonomous category, the rules in Figure \ref{figure1} with (B2$^\prime$) replaced by (B2$^{\prime\prime}$),  the rule \eqref{eq:s2prime}, and the loop rule (R1) are complete for the stabilizer \ZX-calculus.
\end{thm}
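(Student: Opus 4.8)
The plan is to reduce this statement to Theorem \ref{thm:completeness_compact}, the analogous completeness result for a compact closed ambient category. The only structural gap between a braided autonomous category and a compact closed one is that the braiding need not be a symmetry: the over-crossing and the under-crossing of two wires need not coincide. Since the objects are just tensor powers of a single wire, the entire proof hinges on showing that, in the presence of the given rules, the braiding on one wire collapses to a symmetry. Once that is established, the ambient category behaves like a compact closed category on every object, the arbitrary choice of crossing in \eqref{eq:B2''} becomes irrelevant so that (B2$''$) recovers (B2$'$), and the rule set of this theorem reduces exactly to that of Theorem \ref{thm:completeness_compact}, from which completeness follows.

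First I would record the diagrammatic consequences of a braided autonomous category augmented with (R1): these are precisely the 3D-isotopy moves, namely the Reidemeister moves of Figure \ref{fig:reidrules}, the snake equations of Figure \ref{fig:compact_structure}, and the ability to slide arbitrary maps along either strand of a braiding. I would then re-derive the colour-symmetry and upside-down meta-rules in this weaker setting, following the route used for the compact closed case but tracking the over/under information of every crossing — in particular replacing each appeal to (B2$'$) by its braided form \eqref{eq:B2''} and threading crossings carefully through \eqref{eq:s2prime} and the spider rule (S1). The loop rule (R1) is what lets us straighten away the spurious loops that 3D isotopy would otherwise leave behind during these manipulations.

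The crucial step is to prove that the braiding is symmetric, and I expect this to be the main obstacle: it is exactly the point at which 3D isotopy must be upgraded to full 2D topology. I would exploit the complementarity of the green and red spider structures encoded by the bialgebra law \eqref{eq:B2''}. That rule fixes one orientation of the central crossing; combining it with the already-derived colour-swap and upside-down images of the same rule produces the bialgebra law with the \emph{reversed} crossing. Equating the two versions, after straightening the intervening strands with (R1) and \eqref{eq:s2prime}, isolates the identity that the over-crossing equals the under-crossing on a single wire, i.e.\ that the braiding squares to the identity and is therefore a symmetry. The delicate part is the bookkeeping of which strand passes over which through each derived equation, rather than any single hard algebraic identity.

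Once the braiding is known to be symmetric, the remainder is routine. The braided autonomous category is now compact closed, (B2$''$) and (B2$'$) coincide, and the hypotheses become those of Theorem \ref{thm:completeness_compact}. Consequently the derivation of every rule of Figure \ref{fig:ZX_rules}, and then of the full `only the topology matters' meta-rule and the node-symmetry properties, carries over verbatim, giving completeness for the stabilizer \ZX-calculus.
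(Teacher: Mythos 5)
Your proposal follows essentially the same route as the paper: the key step in both is to show the braiding is self-inverse by deriving the bialgebra law with the reversed crossing (via 3D-isotopy/flipping together with the braided commutativity of the (co-)copy maps) and playing it off against (B2$''$) after a Hopf-law/spider rewrite, and then to reduce to the compact closed completeness theorem, noting (as the paper does via (S3$'$) and commutativity of the dots) that the cap and cup become symmetric. This matches the paper's argument and its appendix lemmas, so the proposal is correct in approach.
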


\begin{proof} The idea is to prove that the braiding is self inverse, meaning the category we are working in is actually symmetric monoidal:
\begin{equation}
\beginpgfgraphicnamed{scalars//braidselfinvb}
\InputIfFileExists{scalars//braidselfinvb.tikz}{}{\input{./figures/scalars//braidselfinvb.tikz}}
\endpgfgraphicnamed
\end{equation}
The proof of this equation is given in Lemma \ref{braidissymm}.

Once we know we have a symmetric monoidal category, we can show commutativity of green copy and co-copy as well as their colour-swapped versions by Lemmas \ref{greencommute2}, \ref{greencommuteflip}, \ref{redcommuteflipswap}, \ref{greeninvcommute}, and \ref{greenredcommuteflipswap}. Along with (S3$'$), we obtain the symmetry of cap and cup. Therefore, we come back to the situation described in the previous subsection: working in a compact closed category.
\end{proof}

\section{Conclusion and perspectives}

The stabilizer \ZX-calculus has a complete set of rewrite rules, which allow any equality that can be derived using matrices to also be derived graphically.
We introduce a simplified but still complete version of the stabilizer \ZX-calculus with significantly fewer rewrite rules. In particular, many rules obtained from others by swapping colours and/or flipping diagrams upside-down are no longer assumed. Our aim is to minimise the axioms of the language in order to pinpoint the fundamental structures of quantum mechanics, and also simplify the development and the efficiency of automated  tools for quantum reasoning, like Quantomatic \cite{quanto}.

Among the nine remaining rules of the language, only two are not proved to be necessary, although we know that at least one of them is. The problem of the minimality of the language is left as an open question and can essentially be phrased as follows: do the rules of the language (without the (S3$'$R) rule) force  the  two compact structures, induced by the red and green generators respectively, to coincide?

The simplified stabilizer \ZX-calculus can also serve as a backbone for further developments, in particular concerning the full calculus (allowing arbitrary angles). Several rules we showed to be derivable in the stabilizer \ZX-calculus are also derivable in the full \ZX-calculus: e.g.\ (ZS), which is valid for arbitrary angles, and (K1). The derivation of (K2) on the other hand is valid for the stabilizer fragment only.  Recently, new rules, including the so-called  supplementarity, have been proved to be necessary for the (full) \ZX-calculus \cite{PW15,CycloSupp} and in particular  for the $\pi/4$-fragment of the \ZX-calculus, which corresponds to the so called Clifford+T quantum mechanics. Even if supplementarity and (K2) rules can be derived in the stabilizer \ZX-calculus, a future project is to establish a simple, possibly minimal, set of axioms for the stabilizer \ZX-calculus which contains the rules known to be necessary for arbitrary angles (like supplementarity or (K2)), while avoiding rules which are in some sense specific to the $\pi/2$ fragment, e.g.\ (EU).

The fragment of \ZX-calculus made of the diagrams involving angles multiple of $\pi$ only, is known to be complete for the real stabilizer quantum mechanics \cite{duncan_pivoting_2014}, which is the basis of a full language for real quantum mechanics \cite{Ycalculus}. A perspective is to provide a simplified version of the real stabilizer \ZX-calculus, in particular considering the rules for which we fail to prove the necessity for the stabilizer \ZX-calculus.

We have also proved that the meta-rule `only the connectivity matters' can be derived from the rules of the language together with 3D-isotopy. The latter means that the ambient category is a braided autonomous category which additionally satisfies the Reidemeister rule (R1). We leave as an open question the necessity of the (R1) rule for deriving the connectivity meta-rule. The emergence of braided categories in this context opens new avenues for considering fermionic quantum mechanics \cite{panangaden2010categorical,davydov2013braided}.

A future step would be to extend the search for minimal complete rule sets to the Clifford+T fragment \cite{jeandel_complete_2017} or the full \ZX-calculus \cite{ng_universal_2017}.

\section*{Acknowledgements}

The authors would like to thank Bob Coecke, Ross Duncan, Emmanuel Jeandel, Aleks Kissinger, Kang Feng Ng and Renaud Vilmart for valuable discussions.
We also thank the anonymous reviewers for their comments.

QW acknowledges funding from R\'egion Lorraine,  EPSRC IAA in collaboration with Cambridge Quantum Computing Ltd., and AFOSR grant FA2386-18-1-4028. MB has received funding from EPSRC via grant EP/L021005/1 and from the European Research Council under the European Union's Seventh Framework Programme (FP7/2007-2013) ERC grant agreement no.\ 334828.
The paper reflects only the authors' views and not the views of the ERC or the European Commission.
The European Union is not liable for any use that may be made of the information contained therein. No new data were created during this study. SP acknowledges support from the projects ANR-17-CE25-0009
SoftQPro, ANR-17-CE24-0035 VanQuTe, PIA-GDN/Quantex, and LUE / UOQ.

\bibliographystyle{alpha}
\bibliography{refs}

\newcommand{\etalchar}[1]{$^{#1}$}
\begin{thebibliography}{JPVW17}

\bibitem[Bac14a]{backens_zx-calculus_2013}
Miriam Backens.
\newblock The {ZX}-calculus is complete for stabilizer quantum mechanics.
\newblock {\em New Journal of Physics}, 16(9):093021, September 2014.

\bibitem[Bac14b]{backens_zx-calculus_2014}
Miriam Backens.
\newblock The {ZX}-calculus is complete for the single-qubit {Clifford}+{T}
  group.
\newblock {\em Electronic Proceedings in Theoretical Computer Science},
  172:293--303, December 2014.

\bibitem[Bac15]{backens_making_2015}
Miriam Backens.
\newblock Making the stabilizer {ZX}-calculus complete for scalars.
\newblock {\em Electronic Proceedings in Theoretical Computer Science},
  195:17--32, November 2015.

\bibitem[BPW17]{BPW16}
Miriam Backens, Simon Perdrix, and Quanlong Wang.
\newblock A {Simplified} {Stabilizer} {ZX}-calculus.
\newblock {\em EPTCS}, 236:1--20, January 2017.

\bibitem[CD11]{coecke_interacting_2011}
Bob Coecke and Ross Duncan.
\newblock Interacting quantum observables: categorical algebra and
  diagrammatics.
\newblock {\em New Journal of Physics}, 13(4):043016, April 2011.

\bibitem[CJPV19]{carette2019completeness}
Titouan Carette, Emmanuel Jeandel, Simon Perdrix, and Renaud Vilmart.
\newblock Completeness of graphical languages for mixed states quantum
  mechanics.
\newblock In {\em International Colloquium on Automata, Languages, and
  Programming (ICALP'19)}, 2019.

\bibitem[CK17]{coecke_picturing_2017}
Bob Coecke and Aleks Kissinger.
\newblock {\em Picturing {Quantum} {Processes}: {A} {First} {Course} in
  {Quantum} {Theory} and {Diagrammatic} {Reasoning}}.
\newblock Cambridge University Press, 2017.

\bibitem[DP09]{DP09}
Ross Duncan and Simon Perdrix.
\newblock Graph states and the necessity of {Euler} decomposition.
\newblock In {\em Mathematical Theory and Computational Practice}, volume 5635,
  pages 167--177. Springer Berlin Heidelberg, 2009.

\bibitem[DP14]{duncan_pivoting_2014}
Ross Duncan and Simon Perdrix.
\newblock Pivoting makes the {ZX}-calculus complete for real stabilizers.
\newblock {\em Electronic Proceedings in Theoretical Computer Science},
  171:50--62, December 2014.

\bibitem[DR13]{davydov2013braided}
Alexei Davydov and Ingo Runkel.
\newblock A braided monoidal category for symplectic fermions.
\newblock {\em Symmetries and Groups in Contemporary Physics}, 11:399, 2013.

\bibitem[Got97]{gottesman_stabilizer_1997}
Daniel Gottesman.
\newblock {\em Stabilizer {Codes} and {Quantum} {Error} {Correction}}.
\newblock PhD thesis, Caltech, May 1997.

\bibitem[Got98]{gottesman_heisenberg_1998}
Daniel Gottesman.
\newblock The {Heisenberg} representation of quantum computers.
\newblock In {\em Proceedings of the XXII International Colloquium on Group
  Theoretical Methods in Physics}, July 1998.
\newblock \href{http://arxiv.org/abs/quant-ph/9807006}{arXiv:quant-ph/9807006}.

\bibitem[HNW18]{HNW}
Amar Hadzihasanovic, Kang~Feng Ng, and Quanlong Wang.
\newblock Two complete axiomatisations of pure-state qubit quantum computing.
\newblock In {\em Proceedings of the 33rd Annual ACM/IEEE Symposium on Logic in
  Computer Science}, LICS '18, pages 502--511, New York, NY, USA, 2018. ACM.

\bibitem[JLW18]{jaffe2018holographic}
Arthur Jaffe, Zhengwei Liu, and Alex Wozniakowski.
\newblock Holographic software for quantum networks.
\newblock {\em Science China Mathematics}, 61(4):593--626, 2018.

\bibitem[JPV18a]{jeandel_complete_2017}
Emmanuel Jeandel, Simon Perdrix, and Renaud Vilmart.
\newblock A complete axiomatisation of the {ZX}-calculus for {C}lifford+{T}
  quantum mechanics.
\newblock In {\em Proceedings of the 33rd Annual ACM/IEEE Symposium on Logic in
  Computer Science}, LICS '18, pages 559--568, New York, NY, USA, 2018. ACM.

\bibitem[JPV18b]{JPV-universal}
Emmanuel Jeandel, Simon Perdrix, and Renaud Vilmart.
\newblock Diagrammatic reasoning beyond {C}lifford+{T} quantum mechanics.
\newblock In {\em Proceedings of the 33rd Annual ACM/IEEE Symposium on Logic in
  Computer Science}, LICS '18, pages 569--578, New York, NY, USA, 2018. ACM.

\bibitem[JPV18c]{Ycalculus}
Emmanuel Jeandel, Simon Perdrix, and Renaud Vilmart.
\newblock Y-calculus: A language for real matrices derived from the
  zx-calculus.
\newblock In Bob Coecke and Aleks Kissinger, editors, {\em {\textrm Proceedings
  14th International Conference on} Quantum Physics and Logic, {\textrm
  Nijmegen, The Netherlands, 3-7 July 2017}}, volume 266 of {\em Electronic
  Proceedings in Theoretical Computer Science}, pages 23--57. Open Publishing
  Association, 2018.

\bibitem[JPV19]{ZXNormalForm}
Emmanuel Jeandel, Simon Perdrix, and Renaud Vilmart.
\newblock A generic normal form for zx-diagrams and application to the rational
  angle completeness.
\newblock In {\em Proceedings of the 34th Annual ACM/IEEE Symposium on Logic in
  Computer Science (LICS)}, 2019.

\bibitem[JPVW17]{CycloSupp}
Emmanuel Jeandel, Simon Perdrix, Renaud Vilmart, and Quanlong Wang.
\newblock {ZX}-calculus: Cyclotomic supplementarity and incompleteness for
  {C}lifford+{T} quantum mechanics.
\newblock {\em 42nd International Symposium on Mathematical Foundations of
  Computer Science (MFCS), arXiv preprint arXiv:1702.01945 [quant-ph]}, 2017.

\bibitem[KMF{\etalchar{+}}]{quanto}
Aleks Kissinger, Alex Merry, Ben Frot, Bob Coecke, David Quick, Lucas Dixon,
  Matvey Soloviev, Ross Duncan, and Vladimir Zamdzhiev.
\newblock Quantomatic.
\newblock \url{https://quantomatic.github.io/}.
\newblock Accessed September 2018.

\bibitem[NC10]{nielsen_quantum_2010}
Michael~A. Nielsen and Isaac~L. Chuang.
\newblock {\em {Quantum Computation and Quantum Information}}.
\newblock Cambridge University Press, Cambridge, 2010.

\bibitem[NW17]{ng_universal_2017}
Kang~Feng Ng and Quanlong Wang.
\newblock A universal completion of the {ZX}-calculus.
\newblock {\em arXiv:1706.09877 [quant-ph]}, June 2017.

\bibitem[PP10]{panangaden2010categorical}
Prakash Panangaden and {\'E}ric~Oliver Paquette.
\newblock A categorical presentation of quantum computation with anyons.
\newblock In {\em New structures for Physics}, pages 983--1025. Springer, 2010.

\bibitem[PW16]{PW15}
Simon Perdrix and Quanlong Wang.
\newblock {Supplementarity is Necessary for Quantum Diagram Reasoning}.
\newblock In {\em 41st International Symposium on Mathematical Foundations of
  Computer Science (MFCS 2016)}, volume~58 of {\em LIPIcs}, pages 76:1--76:14,
  2016.

\bibitem[RB01]{raussendorf_one-way_2001}
Robert Raussendorf and Hans~J. Briegel.
\newblock A one-way quantum computer.
\newblock {\em Physical Review Letters}, 86(22):5188--5191, May 2001.

\bibitem[Rei32]{reidemeister_knotentheorie_1932}
Kurt Reidemeister.
\newblock {\em Knotentheorie}.
\newblock Number~1 in Ergebnisse der {Mathematik} und ihrer {Grenzgebiete}.
  Julius Springer, Berlin, 1932.
\newblock English Translation: Knot Theory, B C S Associates (1983).

\bibitem[RV19]{reutter2019shaded}
David~J Reutter and Jamie Vicary.
\newblock Shaded tangles for the design and verification of quantum circuits.
\newblock {\em Proceedings of the Royal Society A}, 475(2224):20180338, 2019.

\bibitem[Sel07]{selinger_dagger_2007}
Peter Selinger.
\newblock Dagger {Compact} {Closed} {Categories} and {Completely} {Positive}
  {Maps}: ({Extended} {Abstract}).
\newblock {\em Electronic Notes in Theoretical Computer Science},
  170(0):139--163, March 2007.

\bibitem[Sel10]{selinger_survey_2010}
Peter Selinger.
\newblock A {Survey} of {Graphical} {Languages} for {Monoidal} {Categories}.
\newblock In Bob Coecke, editor, {\em New {Structures} for {Physics}}, number
  813 in Lecture {Notes} in {Physics}, pages 289--355. Springer Berlin
  Heidelberg, 2010.

\bibitem[Vil19]{euler-zx}
Renaud Vilmart.
\newblock A near-optimal axiomatisation of {ZX}-calculus for pure qubit quantum
  mechanics.
\newblock In {\em Proceedings of the 34th Annual ACM/IEEE Symposium on Logic in
  Computer Science (LICS)}, 2019.

\end{thebibliography}

\end{document}